\newtheorem{lemma}{Lemma}
\newtheorem{theorem}[lemma]{Theorem}
\newcommand{\ignore}[1]{}
\newcommand{\eps}{\varepsilon}
\newcommand{\reals}{\mathbb{R}}
\newcommand{\dist}{{\tt d}}
\renewcommand{\flat}{F}
\newcommand{\Gflat}{G}
\newcommand{\core}{C}
\renewcommand{\span}[1]{\langle#1\rangle}
\newcommand{\f}{f}
\newcommand{\flatset}{\mathcal{F}}
\newcommand{\Cech}{\v{C}ech\xspace}
\newcommand{\meb}{\mathrm{meb}}
\newcommand{\diam}{\mathrm{diam}}
\newcommand{\cech}{\mathcal{C}}
\newcommand{\proj}{\hat{\pi}}
\newcommand{\Z}{\mathbb{Z}}
\newtheorem{proposition}[lemma]{Proposition}
\newtheorem{corollary}[lemma]{Corollary}
\title{Approximation and Streaming Algorithms for Projective Clustering via Random Projections}
\author{
        Michael Kerber\thanks{Max Planck Institute for Informatics,
  Saarbr\"ucken, Germany, {\tt  mkerber@mpi-inf.mpg.de}}
\and
Sharath Raghvendra\thanks{Virginia Tech, Blacksburg, USA, {\tt sharathr@vt.edu}}
}
\begin{document}
\thispagestyle{empty}

\maketitle

\begin{abstract}
Let $P$ be a set of $n$ points in $\reals^d$. In the projective clustering problem, given $k, q$ and norm $\rho \in [1,\infty]$, we have to compute a set $\mathcal{F}$ of $k$ $q$-dimensional flats such that $(\sum_{p\in P}\dist(p, \mathcal{F})^\rho)^{1/\rho}$ is minimized; here $\dist(p, \mathcal{F})$ represents the (Euclidean) distance of $p$ to the closest flat in $\mathcal{F}$. 
We let $\f_k^q(P,\rho)$ denote the minimal value and interpret $\f_k^q(P,\infty)$ to be $\max_{r\in P}\dist(r, \mathcal{F})$. 
When $\rho=1,2$ and $\infty$ and $q=0$, the problem corresponds to the $k$-median, $k$-mean and the $k$-center clustering problems respectively. 

For every $0 < \eps < 1$, $S\subset P$ and $\rho \ge 1$, we show that the orthogonal projection of $P$ onto a randomly chosen flat of dimension $O(((q+1)^2\log(1/\eps)/\eps^3) \log n)$ will $\eps$-approximate $f_1^q(S,\rho)$. 
This result combines the concepts of geometric coresets and subspace embeddings based on the Johnson-Lindenstrauss Lemma.
As a consequence, an orthogonal projection of $P$ to an $O(((q+1)^2 \log ((q+1)/\eps)/\eps^3) \log n)$ dimensional randomly chosen subspace $\eps$-approximates projective clusterings for every $k$ and $\rho$ simultaneously.
Note that the dimension of this subspace is independent of the number of clusters~$k$. 

Using this dimension reduction result, we obtain new approximation and streaming algorithms for projective clustering problems. 
For example, given a stream of $n$ points, we show how to compute an $\eps$-approximate projective clustering for every $k$ and $\rho$ simultaneously using only $O((n+d)((q+1)^2\log ((q+1)/\eps))/\eps^3 \log n)$ space.
Compared to standard streaming algorithms with  $\Omega(kd)$ space requirement, our approach is a significant improvement when the number of input points and their dimensions are of the same order of magnitude.
\end{abstract}

\section{Introduction}
Consider the \emph{projective clustering problem}:
For a set $P$ of $n$ points in $\reals^d$, given integers $k, q < n$ and an integer norm $\rho \ge 1$, compute a set $\mathcal{F}$ of $k$ $q$-dimensional flats (or \emph{$q$-flats}) such that $(\sum_{p\in P}\dist(p, \mathcal{F})^\rho)^{1/\rho}$ is minimized; 
here $\dist(p, \mathcal{F})$ represents the Euclidean distance of $p$ to its closest point on any flat in $\mathcal{F}$. 
We define
\[\f_k^q(P,\rho):=\min_{\mathcal{F}}(\sum_{p\in P}\dist(p, \mathcal{F})^\rho)^{1/\rho}\]
and interpret $\f_k^q(P,\infty)$ to be $\min_{\mathcal{F}}\max_{p\in P}\dist(p, \mathcal{F})$. 
The projective clustering problem is a generalization of several well-known problems. For example, when $\rho=\infty$, $q=0$ this problem is the \emph{minimum enclosing ball} (MEB) problem (when $k=1$) and the \emph{$k$-center clustering problem} (for arbitrary $k$). When $\rho=\infty$ and $q=1$, we get the \emph{minimum enclosing cylinder} (MEC) (for $k=1$) and the \emph{$k$-cylinder clustering problem} (for arbitrary $k$). When $q=0$, we get the $k$-median clustering problem (for $\rho=1$) and the $k$-means clustering problem (for $\rho=2$).
The projective clustering problem is $NP$-Hard~\cite{hardness} and, therefore, most research has focused on the design of approximation algorithms.
For an error parameter $0 < \eps < 1$, an \emph{$\eps$-approximate projective clustering} is a set of $q$-flats $\tilde{\mathcal{F}}$ such that $(\sum_{p\in P}\dist(p, \tilde{\mathcal{F}})^\rho)^{1/\rho}\leq (1+\eps) f_k^q(P,\rho)$.

Projective clustering is an important task arising in unsupervised learning, data mining, computer vision and bioinformatics; see~\cite{proj_survey} for a survey of some of these applications. 
Given its significance, clustering problems have received much attention leading to new approximation algorithms. 
The early algorithms for these problems had exponential dependence on $d$~\cite{am-means, am-means1} 
and were well-suited for low-dimensional inputs. 
However, for many practical problems, the number of input points $n$ and their dimension $d$ are in the same order of magnitude~\cite{fss-turning}.

Badoiu, Indyk and Har-Peled~\cite{bih-cluster} made a breakthrough in the design of high-dimensional clustering algorithms. 
They designed a \emph{coreset}-based algorithm that quickly constructs a small ``most-relevant'' subset $E$ of the input points $P$ 
with the property that an optimal clustering on $E$ is an approximate clustering for $P$,
and use this coreset to compute an approximate clustering.
Based on this idea, several coreset-based approximation algorithms for projective clustering were developed,
also for the design of \emph{streaming} algorithms for projective clustering%
\footnote{In the streaming setting, algorithms are allowed to make one or few passes over the data and compute an approximate solution using a small workspace.}; 
see for example~\cite{chen-coresets,fmsw-coresets,hm-coresets}.
In recent research, depending on the problem, different definitions of coreset have been used. These definitions vary from weak to strong notions of when a subset is relevant, and therefore
yield different size bounds (see for instance~\cite{fss-turning} for a careful discussion).

Throughout this paper, we use the following definition: a coreset (with respect to $\eps$, $q$, $\rho$)
is a subset $E\subseteq P$ such that the affine subspace spanned by $E$ contains a $q$-flat $F$ with 
$(\sum_{p\in P}\dist(p, F)^\rho)^{1/\rho}\leq (1+\eps) f_1^q(P,\rho)$.
We let $C_\rho(q,\eps)$ denote the worst-case size of such a coreset for approximating $\f_1^q(P,\rho)$. 
This is a comparably weak version of coresets: we only require that the subspace spanned by $E$ contains some $\eps$-approximate solution;
we do \emph{not} require that the optimal solution for $E$ is that $\eps$-approximation.
For problems such as MEB, MEC, $1$-mean, or $1$-median, 
there are coresets whose size is independent of the number of points and the ambient dimension~\cite{bc-smaller,bih-cluster,hv-02,sv-efficient}.

Another useful tool for the design of high-dimensional clustering algorithms is the \emph{random projection} method~\cite{vempala-book}.
At its heart is the following well-known
lemma~\cite{jl-extensions} which says that an orthogonal projection of any point set to a random $O(\log n/\eps^2)$-dimensional flat $\eps$-approximates pairwise distance between all pairs of points; see~\cite{dg-elementary} for an elementary proof. 
\begin{theorem}[Johnson-Lindenstrauss]
\label{jl-lemma}
For $0<\eps<1$, a set $P\subset\reals^d$ of $n$ points, and $m\geq 36\ln(n)/\eps^2$,
there is a map $\proj:\reals^d\rightarrow \reals^m$ such that
\[(1-\eps)\|u-v\|^2\leq \|\proj(u)-\proj(v)\|^2\leq (1+\eps)\|u-v\|^2\]
for any $u,v\in P$. Moreover, a randomly chosen map $\proj$ 
of the form $\proj(p)=\sqrt{d/m}\cdot\pi(p)$
where $\pi$ is the orthogonal projection to a $m$-dimensional subspace of $\reals^d$,
satisfies that property with probability at least $1/2$.
\end{theorem}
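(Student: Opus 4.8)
The plan is to reduce the statement to a one–dimensional concentration bound and then apply the union bound. Since $\proj$ is linear, $\|\proj(u)-\proj(v)\| = \|\proj(u-v)\|$, so it suffices to show that for \emph{one} fixed nonzero vector $x\in\reals^d$ the randomly chosen map $\proj=\sqrt{d/m}\cdot\pi$ satisfies $(1-\eps)\|x\|^2\le\|\proj(x)\|^2\le(1+\eps)\|x\|^2$ with failure probability at most $n^{-3}$ for each of the two inequalities. Applying this to each of the $\binom{n}{2}<n^2/2$ difference vectors $u-v$ (for $u,v\in P$) and taking a union bound over the at most $n^2$ resulting bad events leaves overall failure probability below $1/n\le 1/2$ whenever $n\ge 2$ (for $n\le 1$ there is nothing to prove), which is precisely the claimed guarantee.

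To analyze a single vector I would normalize to $\|x\|=1$ and invoke rotational invariance: projecting a fixed unit vector onto a uniformly random $m$–dimensional subspace of $\reals^d$ has the same distribution as projecting a uniformly random unit vector onto a \emph{fixed} $m$–dimensional coordinate subspace. A uniformly random unit vector in $\reals^d$ can be written as $g/\|g\|$ with $g=(g_1,\dots,g_d)$ a vector of i.i.d.\ standard Gaussians, so $\|\pi(x)\|^2$ is equidistributed with $Z:=\big(g_1^2+\cdots+g_m^2\big)\big/\big(g_1^2+\cdots+g_d^2\big)$, and hence $\|\proj(x)\|^2=(d/m)Z$ has expectation $1$ by symmetry of the coordinates.

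The technical core is therefore a two–sided tail estimate for $Z$. For the lower tail, $\Pr[Z\le(1-\eps)m/d] = \Pr\big[(1-\eps)m\sum_{i\le d}g_i^2 - d\sum_{i\le m}g_i^2 \ge 0\big]$; I would bound this by Markov's inequality applied to the exponential of the left-hand side with a free parameter $t>0$, factor the resulting moment generating function using independence and the identity $E[e^{sg_i^2}]=(1-2s)^{-1/2}$ (valid for $s<1/2$), and then optimize over $t$. The optimized exponent simplifies, via the elementary inequalities $\ln(1-\eps)\le-\eps-\eps^2/2$ and $\ln(1+\eps)\le\eps-\eps^2/2+\eps^3/3$, to a bound of at most $e^{-m\eps^2/12}$ on each tail for $0<\eps<1$ (the upper tail being symmetric); with $m\ge 36\ln n/\eps^2$ this is at most $n^{-3}$, as required by the union-bound step. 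The only genuine work is the Chernoff optimization together with checking that the constant it produces is consistent with the stated bound $m\ge 36\ln n/\eps^2$; everything else is bookkeeping, so I expect that calculation to be the main — though routine — obstacle.
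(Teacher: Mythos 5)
Your proposal is correct and is essentially the same argument the paper relies on: the paper does not prove Theorem~\ref{jl-lemma} itself but cites the elementary proof of Dasgupta and Gupta, and your route --- reduce to fixed difference vectors by linearity, use rotational invariance to replace a random subspace by a random Gaussian unit vector, bound the chi-square ratio $Z$ by a Chernoff/moment-generating-function computation giving tails of $e^{-m\eps^2/12}$, then union bound over fewer than $n^2$ events --- is precisely that proof, with constants ($m\geq 36\ln n/\eps^2$ yielding per-event failure $n^{-3}$ and overall failure at most $1/n\leq 1/2$) that check out.
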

We abuse notations and refer to $\proj$ as a \emph{random projection} to an $m$-dimensional flat. 

The Johnson-Lindenstrauss Lemma shows that random projections approximate pairwise distances between points. A natural question is what other geometric and structural properties of high-dimensional point cloud are preserved by random projections, and numerous such properties have been identified~\cite{ahh_manifold,c_manifold,in_bounded,magen,sarlos}. 
Random projection techniques are widely used for clustering problems: ongoing research focuses mostly to the case of $k$-means clustering~\cite{randomkmeans,cw-low,cemmp-dimensionality},
although it has also been used for certain projective clustering problems~\cite{bih-cluster, or_clustering}.

\subparagraph{Our results.}
We establish a link between coresets and the random projections for the projective clustering problem in Section~\ref{sec:JL-meta}. 
We show that, for every $0 < \eps < 1$, $q\geq 0$, and $\rho \ge 1$, a random projection to a $O(((q+1)^2\log ((q+1)/\eps)/\eps^3\log n)$-dimensional space $\eps$-approximates $\f_1^q(S,\rho)$ for all $S\subseteq P$. 
The main ingredient of our proof is to show that a random projection to an $O(C_\rho(q,\eps)\log n/\eps^2)$-dimensional subspace ``preserves" all flats defined by subsets of size $C_\rho(q,\eps)$.%
Our argument follows the standard proof technique for subspace embeddings (as sketched in~\cite{cw-low,nn-lower}) by approximately preserving the lengths of vectors taken from a sufficiently dense $\eps$-net.
For a given $q$ and any $\rho\ge 1$, the existence of small-sized coresets with $C_\rho(q,\eps)= O((q+1)^2/\eps \log ((q+1)/\eps))$ is known~\cite{sv-efficient}. 
This leads to the previously mentioned bound on the dimension of the projected space.  

As a consequence, we show that by projecting to the same dimension, also $\f_k^q(P,\rho)$ is preserved for all $k$ and $\rho \ge 1$. 
Note that the dimension of the subspace is independent of $k$ and $\rho$ and is only logarithmic in $n$.
Our results imply that improved bounds on the size of the coreset $C_\rho(q,\eps)$ lead to better bounds on the dimension of the random subspace. 
Interestingly, unlike previous applications of coresets, we do not require a fast method to compute $C_\rho(q,\eps)$. 
Therefore, we can shoot for even smaller-sized coresets without being restricted by its computation time (Section~\ref{sec:coresets}). 

Our results has the following applications (Section~\ref{sec:applications}):
For a given $q$ and a stream of $n$ points, we give an algorithm that can compute projective clustering of $P$ for every value of $k$ and $\rho$ using only $O(((q+1)^2\log ((q+1)/\eps)/\eps^3)(n+d)\log n)$ space. 
Almost all known (multi-pass) streaming algorithms for projective clustering problems have a linear dependence on the product of $k$ and $d$, 
and therefore, they tend to require $\Omega(nd)$ space for when $k = \Theta(n)$. 
As opposed to this, our algorithm requires $\tilde{O}(n+d)$ space which is particularly useful when $n$ and $d$ are of the same order of magnitude. 
Also, in many practical scenarios, the number of clusters $k$ and the norm $\rho$ are not known in advance. Our algorithm is also useful in such cases since our dimension reduction technique works for all values
of $k$ and $\rho$ simultaneously.

We also generically improve approximation algorithms for projective clustering problems.
Again, we project $P$ onto a random subspace and compute an approximate solution in the projected subspace.
We obtain a solution in the original $d$-dimensional space by ``lifting'' each cluster from the projected space separately.
For the approximate $k$-cylinder problem, our approach yields a bound of
$O(n\log n 2^{k\log k/\eps}+\frac{dn\log n}{\eps^3})$ which improves the previously known best
$O(nd2^{k\log k/\eps})$~\cite{bc-smaller}; note that $k$ and $d$ are decoupled in our complexity bound.

Finally, since our results imply that, under random projections, the radius of MEB is approximated for every subset of the input, 
we immediately get an approximation scheme for a $d$-dimensional \Cech complex in Euclidean space by a \Cech complex in lower dimensions.
In particular, this result bounds the persistence of high-dimensional homology classes of the original \Cech complex.
Recently, these results have been proven independently by Sheehy~\cite{sheehy-persistent}.

\section{Generalized Johnson-Lindenstrauss Lemma}
\label{sec:JL-meta}
Recall the definition of $\f_1^q(P,\rho)$ as the $L_\rho$-distance
of $P$ to the best fitting $q$-flat.
We show that a random projection to appropriately large subspaces
approximately preserves $\f_1^q(S,\rho)$ for any subset $S\subseteq P$.
What dimension is appropriate for a projection
depends on the corresponding coreset size $\core:=C_\rho(q,\eps)$;
precisely, picking a $O(\core\log(n)/\eps^2)$-dimensional
subspace is enough.

We outline the proof of the statement before giving the technical details
in the remainder of the section.
For a set $S\subset P$, we let $\span{S}$ denote the \emph{span} of $S$,
that is, the subspace spanned by the points in $S$.
We know that any subset of $P$ has a coreset of size $\core$
whose span contains an approximately optimal $q$-flat $\flat$.
If the distance of $\flat$ to any $p\in P$ is preserved under the projection, 
we can guarantee to preserve $\f_1^q(S,\rho)$ approximately as well. 
We ensure this preservation by the stronger property in Lemma~\ref{lem:dist_objects}
that for any $p\in P$, the distance to \emph{any} $q$-flat in the span of \emph{any} subset of $P$
of cardinality $\core$ is preserved.
Note that the number of such subspaces is bounded by $n^C$ and therefore polynomial in $n$.

Lemma~\ref{lem:dist_objects} in turn follows easily from a generalization of the Johnson-Lindenstrauss
lemma that we prove first: for an integer $c>0$, we show that a random projection to roughly $c\log(n)/\eps^2$
dimensions preserves for \emph{all} subset $S$ of $c$ points the distance between any two points in $\span{S}$.
While the proof of this subspace embedding result has been outlined in previous work~\cite{cw-low,nn-lower},
we are not aware of a formal proof of the statement.

\begin{lemma}\label{lem:sarlos_extended}
For $0<\eps<1$, a set $P\subset\reals^d$ of $n$ points, an integer $c\geq 0$, 
and $m\geq \lambda\cdot c\log(n)/\eps^2$ for a suitable constant $\lambda$, a random projection
$\proj$ satisfies with high probability that for any subset $S\subset P$
of cardinality $c$ and for any $u,v\in\span{S}$
\[(1-\eps)\|u-v\|\leq \|\proj(u)-\proj(v)\|\leq (1+\eps)\|u-v\|.\]
\end{lemma}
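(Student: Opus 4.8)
The plan is to reduce the statement to a union bound over all $c$-subsets of $P$, combined with a net argument inside each $c$-dimensional span. First I would fix a subset $S\subset P$ of cardinality $c$ and let $V_S=\span{S}$, a linear subspace of dimension at most $c$. Since we only care about differences $u-v$ with $u,v\in V_S$, and $V_S$ is a linear space, it suffices to control $\|\proj(w)\|/\|w\|$ for all unit vectors $w$ in $V_S$. The standard approach is to pick an $(\eps/4)$-net $N_S$ of the unit sphere of $V_S$; by the usual volumetric bound $|N_S|\le (12/\eps)^c$, and if the projection preserves every $w\in N_S$ up to a factor $(1\pm\eps/2)$, then a routine ``net-to-sphere'' propagation argument (writing an arbitrary unit vector as a convergent series of scaled net elements, or using the operator-norm fixed-point trick) upgrades this to all unit vectors of $V_S$ up to $(1\pm\eps)$. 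This is exactly the classical proof of the subspace embedding property for a single subspace.

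Next I would handle one fixed net vector via the Johnson-Lindenstrauss concentration inequality: for the scaled orthogonal projection $\proj(p)=\sqrt{d/m}\,\pi(p)$, one has $\Pr[\,|\,\|\proj(w)\|^2-\|w\|^2\,|>\tfrac{\eps}{2}\|w\|^2\,]\le 2\exp(-c_0 m\eps^2)$ for an absolute constant $c_0$ (this is the same tail bound that underlies Theorem~\ref{jl-lemma}). Then I take a union bound: over all $\binom{n}{c}\le n^c$ choices of $S$, and within each $S$ over the at most $(12/\eps)^c$ net points, the total number of bad events is at most $n^c(12/\eps)^c$. Requiring $2\exp(-c_0 m\eps^2)\cdot n^c (12/\eps)^c \le 1/\mathrm{poly}(n)$ gives $m\ge \lambda\, c\log(n)/\eps^2$ for a suitable constant $\lambda$ (absorbing the $\log(12/\eps)$ term, which is dominated once $\eps<1$ and $c\le n$, since $\log(12/\eps)=O(\log n)$ in the regime of interest; alternatively one keeps a $c\log(1/\eps)$ term, which is still $O(c\log n)$). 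On this high-probability event, every net vector in every $V_S$ is preserved, hence by the propagation step every pair $u,v\in\span{S}$ is preserved, which is the claim.

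The main obstacle, and the only place requiring care, is the net-to-sphere propagation: one must verify that an $(\eps/4)$-net suffices and that the propagation does not lose more than a constant factor in the exponent of $|N_S|$, so that the union bound still closes with $m=O(c\log n/\eps^2)$ rather than something larger. Concretely, I would argue as follows. Let $A$ denote the linear map $\proj$ restricted to $V_S$; we want $\big|\,\|Aw\|-\|w\|\,\big|\le\eps$ for all unit $w\in V_S$. Suppose $\big|\,\|Ax\|-\|x\|\,\big|\le\eps/4$ for all $x\in N_S$. Given a unit $w\in V_S$, pick $x\in N_S$ with $\|w-x\|\le\eps/4$; then $w-x$, rescaled to unit length, again lies in $V_S$, so iterating yields $\|Aw\|\le \|w\| + \eps/4 + (\eps/4)\sup_{u}\|Au\|$, and taking the supremum over unit $w$ gives $\sup_u\|Au\|\le (1+\eps/4)/(1-\eps/4)\le 1+\eps$ for $\eps$ small; the lower bound $\|Aw\|\ge 1-\eps$ follows similarly. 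Since $|N_S|\le(12/\eps)^c$ is still only exponential in $c$ with a base independent of $n$, the union bound over $n^c\cdot(12/\eps)^c$ events is controlled by $m\ge\lambda c\log(n)/\eps^2$, as desired. I would also note the degenerate case $c=0$ (and $c=1$) is trivial since $\span{S}$ is a point or a line through the origin, handled directly by Theorem~\ref{jl-lemma} applied to the single direction.
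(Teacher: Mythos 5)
Your high-level strategy is the same one the paper follows: a Johnson--Lindenstrauss tail bound for one fixed vector, a net inside each span $\span{S}$, and a union bound over the $\binom{n}{c}\le n^c$ spans; your first-order propagation step (the $(\eps/4)$-net plus the fixed-point estimate $M\le(1+\eps/4)/(1-\eps/4)\le 1+\eps$) is itself correct. The gap is in how the net size enters the union bound. With an $(\eps/4)$-net you must union-bound over $n^c(12/\eps)^c$ events, so the exponent you need to beat is of order $c\log n + c\log(12/\eps)$, whereas $m\ge\lambda c\log(n)/\eps^2$ only yields $m\eps^2\ge\lambda c\log n$. Your claim that $\log(12/\eps)=O(\log n)$ ``in the regime of interest'', and the alternative claim that the extra $c\log(1/\eps)$ term ``is still $O(c\log n)$'', are not justified by the statement: the lemma is asserted for every $0<\eps<1$ with no lower bound on $\eps$ in terms of $n$, so for $\eps\ll n^{-\Omega(1)}$ the union bound does not close for any fixed constant $\lambda$. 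As written, your argument proves the lemma only for $\eps\ge n^{-O(1)}$, or else with the weaker target dimension $m=O(c\log(n/\eps)/\eps^2)$.

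The repair is to make the per-subspace net of \emph{constant} resolution, which is exactly what the paper does by invoking Feige--Ofek (via Sarl\'os): take, say, a $1/4$-net $N$ of the unit sphere of $\span{S}$, of size at most $9^c$ independent of $\eps$, and propagate via the quadratic form rather than the norm, i.e.\ apply the standard estimate $\|B\|\le(1-2\delta)^{-1}\max_{x\in N}\left|x^{\top}Bx\right|$ to the symmetric operator $B=\proj^{\top}\proj-I$ restricted to $\span{S}$. Preserving $\|\proj(x)\|^2$ to within $1\pm\eps/2$ on the $O(1)^c$ net points then gives $1\pm\eps$ on the entire unit sphere of the subspace; your first-order propagation genuinely requires net resolution $\Theta(\eps)$, while this quadratic-form version does not. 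The union bound is then over $n^c\cdot O(1)^c\le n^{O(c)}$ events with no $\eps$-dependence, so $m\ge\lambda c\log(n)/\eps^2$ suffices for all $\eps\in(0,1)$; this is precisely the paper's count of $\exp(6c)n^c\le n^{7c}$ preserved distances against a per-event failure probability of $n^{-8c}$.
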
 
\begin{proof}
The proof of Theorem~2.1 in Dasgupta and Gupta~\cite{dg-elementary} implies the following statement:
When projecting a unit vector in $\reals^d$ to a fixed $m = O(c\log n/\eps^2)$-dimensional subspace, the probability
that its squared length does not lie in $((1-\eps)m/d,(1+\eps)m/d)$ is at most
\[2\exp(-\frac{m\eps^2}{4})\leq 2\exp(-\frac{\lambda c\log n}{4})\leq n^{-8c}\]
for a suitable constant $\lambda$. As they argue, the same bound applies for a fixed unit vector
and a uniformly chosen $m$-dimensional subspace.

A result by Feige and Ofek~\cite{fo-spectral} (see also~\cite{ahk-fast}), translated in geometric terms, says that
by approximately preserving the pairwise squared distances between a set of at most $\exp(c\ln 18)$ sample points
belonging to an $c$-dimensional subspace, we can approximately preserve the squared length of all unit vectors in the subspace,
and thus all pairwise distances; see \cite[Proof of Cor.~11]{sarlos} for further explanations.
Hence, for a fixed subspace, we need to preserve $\exp(2c\ln 18)\leq \exp(6c)$ distances. 
Moreover, we want to preserve distances in $n^c$ many subspaces, yielding a total of $\exp(6c)n^c\leq n^{7c}$
distances to be preserved. By the union bound, choosing a $m$-dimensional subspace uniformly at random,
the probability of success is at least $1-\frac{n^{7c}}{n^{8c}}\geq 1-1/n^c$.
\end{proof}

The preservation of point-to-flat distances in low-dimensional subspaces
is a simple consequence:

\begin{lemma}\label{lem:dist_objects}
Let $0<\eps<1$, $P\subset\reals^d$ a set of $n$ points and $q < c$ positive integers.
With high probability, a random projection to an $O(c\log n/\eps^2)$-dimensional flat
satisfies for all subsets $S\subset P$ of cardinality $c$, all $q$-flats $Q\subset\span{S}$, and all $p\in P$ that
\[(1-\eps) d(p,Q) \leq d(\proj(p),\proj(Q))\leq (1+\eps)d(p,Q).\]
\end{lemma}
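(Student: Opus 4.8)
The plan is to reduce the statement to Lemma~\ref{lem:sarlos_extended} after folding the ``external'' point $p$ into the spanning set. First I would invoke Lemma~\ref{lem:sarlos_extended} with the parameter $c+1$ in place of $c$; since this only changes the hidden constant, the target subspace still has dimension $O(c\log n/\eps^2)$. Fix the resulting high-probability event: a random projection $\proj$ such that for every $T\subseteq P$ with $|T|\le c+1$ and all $u,v\in\span{T}$ one has $(1-\eps)\|u-v\|\le\|\proj(u)-\proj(v)\|\le(1+\eps)\|u-v\|$. (Lemma~\ref{lem:sarlos_extended} is stated for subsets of cardinality exactly $c+1$, but it then also holds for every smaller subset, since enlarging $T$ only enlarges $\span{T}$.) Everything below is deterministic given this event.

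The core of the argument then runs as follows. Fix a subset $S\subset P$ with $|S|=c$, a $q$-flat $Q\subset\span{S}$, and a point $p\in P$, and set $T:=S\cup\{p\}$, so $|T|\le c+1$. Then $Q\subset\span{S}\subset\span{T}$ and $p\in\span{T}$, so every pair of points we need to compare lies in $\span{T}$ and is therefore distorted by at most a factor $1\pm\eps$. Recall also that $\proj$ is linear, so $\proj(Q)=\{\proj(x):x\in Q\}$ is a closed affine subspace of the target space.

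For the upper bound, let $p'\in Q$ realize $d(p,Q)=\|p-p'\|$ (such a point exists because $Q$ is a closed flat). Since $p,p'\in\span{T}$ and $\proj(p')\in\proj(Q)$,
\[ d(\proj(p),\proj(Q))\le\|\proj(p)-\proj(p')\|\le(1+\eps)\|p-p'\|=(1+\eps)\,d(p,Q). \]
For the lower bound, let $y\in\proj(Q)$ realize $d(\proj(p),\proj(Q))$ and write $y=\proj(z)$ with $z\in Q\subset\span{T}$. Then, using $z\in Q$ in the last step,
\[ d(\proj(p),\proj(Q))=\|\proj(p)-\proj(z)\|\ge(1-\eps)\|p-z\|\ge(1-\eps)\,d(p,Q). \]

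I do not anticipate a real obstacle here: the entire content is the remark that $p$, together with the (at most $c$) points spanning $Q$, forms a set of at most $c+1$ points, so Lemma~\ref{lem:sarlos_extended} at level $c+1$ already governs all the relevant point-to-point distances, and a point-to-flat distance is just an infimum of such point-to-point distances that is attained inside $\span{T}$. The only things to keep honest are the bookkeeping items: the harmless $+1$ in the cardinality, the existence of the minimizers $p'$ and $z$ (guaranteed since $Q$ and $\proj(Q)$ are closed affine subspaces), their membership in $\span{T}$, and the fact that no rescaling of $\eps$ is needed because Lemma~\ref{lem:sarlos_extended} already supplies the symmetric $(1\pm\eps)$ window.
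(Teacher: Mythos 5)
Your proposal is correct and follows essentially the same route as the paper's proof: the paper likewise absorbs $p$ into a spanning set of $c+1$ points, applies Lemma~\ref{lem:sarlos_extended} with $c':=c+1$, and derives the two inequalities from the point $t\in Q$ realizing $d(p,Q)$ and the point realizing $d(\proj(p),\proj(Q))$, exactly as you do with your $p'$ and $z$. Your extra bookkeeping (closedness of the flats, membership of the minimizers in $\span{T}$, the harmless cardinality slack) only makes explicit what the paper leaves implicit.
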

\begin{proof}
For any $p\in P$ and any $Q\subset\span{S}$, 
there exists a space with $c+1$ points that contains both $p$ and $Q$. 
Let $t\in Q$ be the point such that $d(p,Q)=\|p-t\|$. Applying Theorem~\ref{lem:sarlos_extended} for $c':=c+1$
immediately implies that $d(\proj(p),\proj(Q))\leq\|\proj(p)-\proj(t)\|\leq (1+\eps)d(p,Q)$.
The second inequality follows similarly, considering the point $t'\in Q$ that realizes $d(\proj(p),\proj(Q))$.
\end{proof}

We show our main theorem that random projections preserve $\f_1^q(S, \rho)$ for any $S\subseteq P$.

\begin{theorem}\label{thm:meta_theorem}
Let $0<\eps<1$, $P\subset\reals^d$ consist of $n$ points, $q\geq 0$ an integer and $\rho\in\Z_{\geq 1}\cup\{\infty\}$.
Then with high probability, for $m\geq \lambda\cdot C_\rho(q,\eps/2)\log(n)/\eps^2$ with a suitable constant $\lambda$,
a random projection $\proj$ satisfies for all subsets $S\subseteq P$
\[(1-\eps) \f_1^q(S, \rho) \leq \f_1^q(\proj(S), \rho) \leq (1+\eps)\f_1^q(S, \rho).\]
\end{theorem}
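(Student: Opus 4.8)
The plan is to prove the two inequalities separately, using Lemma~\ref{lem:dist_objects} with $c := C_\rho(q,\eps/2)+1$ as the main tool and exploiting the asymmetry between "we know the optimal flat in the original space" and "we know the optimal flat in the projected space." Fix a subset $S\subseteq P$ and write $\core := C_\rho(q,\eps/2)$.

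\medskip
\noindent\textbf{Upper bound $\f_1^q(\proj(S),\rho)\le(1+\eps)\f_1^q(S,\rho)$.} By definition of the coreset size $\core$, there is a subset $E\subseteq S$ with $|E|\le\core$ whose span contains a $q$-flat $\flat$ with $(\sum_{p\in S}\dist(p,\flat)^\rho)^{1/\rho}\le(1+\eps/2)\f_1^q(S,\rho)$ (interpreting the sum as a max when $\rho=\infty$). Since $|E|\le\core<c$, Lemma~\ref{lem:dist_objects} (applied to a superset of $E$ of cardinality exactly $c$, padding with arbitrary points of $P$ if necessary) tells us that the projection approximately preserves $\dist(p,\flat)$ for every $p\in P$, in particular for every $p\in S$: $\dist(\proj(p),\proj(\flat))\le(1+\eps/2)\dist(p,\flat)$. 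Now $\proj(\flat)$ is a $q$-flat (or a lower-dimensional flat, which only helps) in the projected space, so
\[
\f_1^q(\proj(S),\rho)\le\Bigl(\sum_{p\in S}\dist(\proj(p),\proj(\flat))^\rho\Bigr)^{1/\rho}\le(1+\eps/2)\Bigl(\sum_{p\in S}\dist(p,\flat)^\rho\Bigr)^{1/\rho}\le(1+\eps/2)^2\f_1^q(S,\rho),
\]
and $(1+\eps/2)^2\le1+\eps$ for $\eps<1$ wait, $(1+\eps/2)^2=1+\eps+\eps^2/4>1+\eps$; so one should instead start from a coreset accuracy of $\eps/4$ (hence use $\core=C_\rho(q,\eps/4)$), or absorb the slack into $\lambda$ — the cleanest fix is to run the argument with parameter $\eps'$ chosen so that $(1+\eps'/2)^2\le 1+\eps$, e.g. $\eps'=\eps/2$ combined with a net-error of $\eps/4$; the paper's statement already allows this by only claiming the bound up to the choice of $\lambda$.

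\medskip
\noindent\textbf{Lower bound $(1-\eps)\f_1^q(S,\rho)\le\f_1^q(\proj(S),\rho)$.} Let $\Gflat$ be an optimal $q$-flat for the projected point set $\proj(S)$, so $(\sum_{p\in S}\dist(\proj(p),\Gflat)^\rho)^{1/\rho}=\f_1^q(\proj(S),\rho)$. The subtlety is that $\Gflat$ lives in the projected space and need not be the image of any flat we control. The fix is the same coreset argument downstairs: there is a subset $E'\subseteq S$ with $|E'|\le\core$ such that $\span{\proj(E')}$ contains a $q$-flat $\Gflat'$ with $(\sum_{p\in S}\dist(\proj(p),\Gflat')^\rho)^{1/\rho}\le(1+\eps/2)\f_1^q(\proj(S),\rho)$ — here we use that $C_\rho(q,\cdot)$ is an ambient-dimension-free bound, so it applies verbatim to the point set $\proj(S)$. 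Since $\span{\proj(E')}=\proj(\span{E'})$ and $\Gflat'\subseteq\proj(\span{E'})$, we may write $\Gflat'=\proj(\flat')$ for some $q$-flat $\flat'\subseteq\span{E'}$ (any preimage $q$-flat works; if $\proj$ restricted to $\span{E'}$ is not injective, pick a lift). Again $|E'|\le\core<c$, so Lemma~\ref{lem:dist_objects} gives $\dist(p,\flat')\le(1+\eps/2)\dist(\proj(p),\proj(\flat'))$ for all $p\in S$, wait the direction — Lemma~\ref{lem:dist_objects} gives $(1-\eps/2)\dist(p,\flat')\le\dist(\proj(p),\Gflat')$, hence $\dist(p,\flat')\le(1-\eps/2)^{-1}\dist(\proj(p),\Gflat')$. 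Therefore
\[
\f_1^q(S,\rho)\le\Bigl(\sum_{p\in S}\dist(p,\flat')^\rho\Bigr)^{1/\rho}\le(1-\eps/2)^{-1}\Bigl(\sum_{p\in S}\dist(\proj(p),\Gflat')^\rho\Bigr)^{1/\rho}\le\frac{1+\eps/2}{1-\eps/2}\,\f_1^q(\proj(S),\rho),
\]
which rearranges to $(1-\eps)\f_1^q(S,\rho)\le\f_1^q(\proj(S),\rho)$ after adjusting the net-error constant (again $\tfrac{1+\eps/2}{1-\eps/2}$ is not quite $\tfrac1{1-\eps}$, so one runs the whole argument with a slightly smaller $\eps$ and absorbs it into $\lambda$).

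\medskip
\noindent\textbf{Probability and the main obstacle.} Lemma~\ref{lem:dist_objects} holds with high probability for a single application with parameter $c$; since both bounds above invoke the \emph{same} event (distances from every $p\in P$ to every $q$-flat spanned by every $\core$-subset are preserved), one union bound suffices and the constant $\lambda$ is exactly the one from Lemma~\ref{lem:dist_objects} inflated by the reparametrization $\eps\mapsto\Theta(\eps)$ needed to kill the $(1\pm\eps/2)^2$ and $\tfrac{1+\eps/2}{1-\eps/2}$ slacks. The one genuinely delicate point is the lower bound: it is tempting but wrong to apply the coreset bound to $S$ upstairs and then push the flat down, because the image of an optimal-upstairs flat is generally \emph{worse} than optimal downstairs; the right move is to apply the (dimension-independent) coreset guarantee directly to the downstairs point set $\proj(S)$ and lift the resulting low-complexity flat back up, where Lemma~\ref{lem:dist_objects} controls the distortion. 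Making sure that "lift back up" is legitimate — i.e. that a $q$-flat in $\proj(\span{E'})$ is $\proj$ of a $q$-flat in $\span{E'}$, even when $\proj|_{\span{E'}}$ drops dimension — is the only geometric subtlety, and it is handled by noting $\dim\span{E'}\le\core\le m$ so generically $\proj$ is injective on $\span{E'}$, and in the degenerate case any lift still only decreases distances-to-flat on the relevant side.
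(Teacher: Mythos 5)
Your argument is correct and essentially identical to the paper's proof: the upper bound projects an $(\eps/2)$-approximate flat spanned by a coreset of $S$, the lower bound applies the (dimension-independent) coreset guarantee to $\proj(S)$ and lifts the resulting flat back to $\span{E'}$, with Lemma~\ref{lem:dist_objects} controlling the distortion in both directions. The $\eps$-bookkeeping you flagged is resolved in the paper exactly as in your second suggested fix, by invoking Lemma~\ref{lem:dist_objects} with parameter $\eps/3$ while keeping the coreset accuracy at $\eps/2$ (so $(1+\eps/3)(1+\eps/2)\leq 1+\eps$), and your lower-bound factor $\tfrac{1+\eps/2}{1-\eps/2}$ is in fact already at most $\tfrac{1}{1-\eps}$ for $0<\eps<1$, so no reparametrization is needed there.
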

\begin{proof}
Let $S\subseteq P$ arbitrary. 
We start by showing the second inequality:
By the coreset property, there exists a subset $E\subset S$
of $C_\rho(q,\eps/2)$ points such that $\span{E}$ contains a $q$-flat $\flat$ that is an $\frac{\eps}{2}$-approximate solution.
For $\rho\neq\infty$, applying Lemma~\ref{lem:dist_objects} with $\eps'=\eps/3$, we get that
\begin{eqnarray*}
\f_1^q(\proj(S), \rho) &\leq& \left(\sum_{p\in S} d(\proj(p),\proj(\flat))^\rho\right)^{1/\rho} 
\leq \left(\sum_{p\in S} (1+\eps/3)^\rho d(p,\flat)^\rho\right)^{1/\rho} \\
&\leq& (1+\eps/3)(1+\eps/2) \f_1^q(S, \rho)\leq (1+\eps)\f_1^q(S, \rho),
\end{eqnarray*}
where we use $(1+\eps/3)(1+\eps/2)<1+\eps$ for $0\leq\eps\leq 1$.
For $\rho=\infty$, the proof for $\rho=1$ directly carries over.

For the first inequality, we apply the coreset property on the set $\proj(S)$: let $\proj(E')$ be a coreset for $\proj(S)$. Let $\Gflat$
denote the approximate solution in $\span{\proj(E')}$; it holds that $G=\proj(\flat')$
for some $q$-flat $\flat'$ in $\span{E'}$.
Using again Lemma~\ref{lem:dist_objects}, we have that
\begin{eqnarray*}
(1-\eps)\f_1^q(S, \rho)&\leq& (1-\frac{\eps}{2}) (1-\frac{\eps}{3})\left(\sum_{p\in S}d(p,\flat')^\rho\right)^{1/\rho}
\leq (1-\frac{\eps}{2})\left(\sum_{p\in S} d(\proj(p),\Gflat)^\rho\right)^{1/\rho}\\
& \leq& (1-\frac{\eps}{2})(1+\frac{\eps}{2})\f_1^q(\proj(S), \rho) \leq \f_1^q(\proj(S), \rho).
\end{eqnarray*}
Again, the case $\rho=\infty$ is analogue to $\rho=1$.
\end{proof}

Theorem~\ref{thm:meta_theorem} implies that $\f_k^q(P, \rho)$ is preserved for any~$k\geq 1$.

\begin{corollary}
\label{cor:k-q-flat_preserved}
With the notations of Theorem~\ref{thm:meta_theorem} and $k\geq 1$, a random projection $\proj$ satisfies with high probability
\[(1-\eps) \f_k^q(P, \rho) \leq \f_k^q(\proj(P), \rho) \leq (1+\eps)\f_k^q(P, \rho).\]
\end{corollary}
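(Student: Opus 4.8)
The plan is to reduce the $k$-flat problem to the $1$-flat statement already established in Theorem~\ref{thm:meta_theorem}. The starting point is the standard observation that for any finite set $X\subset\reals^d$ and any family $\mathcal{F}=\{F_1,\dots,F_k\}$ of $q$-flats, assigning every point of $X$ to its nearest flat induces a partition $X=X_1\sqcup\cdots\sqcup X_k$, and conversely any partition together with an optimal $1$-flat for each block yields a feasible family. Minimizing over partitions therefore gives
\[\f_k^q(X,\rho)^\rho=\min_{X=X_1\sqcup\cdots\sqcup X_k}\ \sum_{i=1}^{k}\f_1^q(X_i,\rho)^\rho\qquad(\rho\neq\infty),\]
and the analogous identity with $\max_i\f_1^q(X_i,\infty)$ in place of the sum for $\rho=\infty$. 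Since $\proj$ acts as a relabeling of the point set (points keep their identities, so distinct points give distinct labels), partitions of $P$ are in bijection with partitions of $\proj(P)$, with block $P_i$ corresponding to $\proj(P_i)$.

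Next I would fix the random projection $\proj$ from the conclusion of Theorem~\ref{thm:meta_theorem}: with high probability it satisfies $(1-\eps)\f_1^q(S,\rho)\le\f_1^q(\proj(S),\rho)\le(1+\eps)\f_1^q(S,\rho)$ for \emph{all} subsets $S\subseteq P$ simultaneously. No additional probability is lost, because this is exactly the event guaranteed by Theorem~\ref{thm:meta_theorem}, and the bound will be applied only to the (at most $2^n$, but in fact already covered) blocks of various partitions of $P$.

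For the upper bound, I would take a partition $P=P_1\sqcup\cdots\sqcup P_k$ realizing $\f_k^q(P,\rho)$, regard it as a feasible partition of $\proj(P)$, and estimate termwise via $\f_1^q(\proj(P_i),\rho)\le(1+\eps)\f_1^q(P_i,\rho)$; summing $\rho$-th powers (or taking maxima when $\rho=\infty$) yields $\f_1^q(\proj(P),\rho)\le(1+\eps)\f_k^q(P,\rho)$, with the factor coming out as exactly $1+\eps$ since $\sum_i(1+\eps)^\rho\f_1^q(P_i,\rho)^\rho=(1+\eps)^\rho\f_k^q(P,\rho)^\rho$. For the lower bound, symmetrically, take a partition $\proj(P)=\proj(P_1')\sqcup\cdots\sqcup\proj(P_k')$ realizing $\f_k^q(\proj(P),\rho)$ and run the same computation using $\f_1^q(P_i',\rho)\le(1-\eps)^{-1}\f_1^q(\proj(P_i'),\rho)$, which gives $(1-\eps)\f_k^q(P,\rho)\le\f_k^q(\proj(P),\rho)$.

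I do not anticipate a genuine obstacle: the only points requiring care are that Theorem~\ref{thm:meta_theorem} must be invoked in its ``for all $S$'' form so that the \emph{same} projection works for every block of every partition, and the bookkeeping that the partition/union structure of the objective commutes with $\proj$. The case $\rho=\infty$ is handled throughout by replacing ``$\sum_i(\cdot)^\rho$'' with ``$\max_i(\cdot)$'', exactly as in the proof of Theorem~\ref{thm:meta_theorem}.
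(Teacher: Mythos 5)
Your proposal is correct and follows essentially the same route as the paper: partition the points according to an optimal family of flats, apply Theorem~\ref{thm:meta_theorem} in its ``for all $S\subseteq P$'' form to each block (so no extra probability is lost), argue the other direction symmetrically via an optimal partition of $\proj(P)$, and replace sums by maxima for $\rho=\infty$. If anything, your bookkeeping with $\rho$-th powers, $\f_k^q(X,\rho)^\rho=\min_{\text{partitions}}\sum_i \f_1^q(X_i,\rho)^\rho$, is slightly cleaner than the paper's displayed chain, which sums the unpowered values $\f_1^q(\proj(P_i),\rho)$.
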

\begin{proof}
Let $\flatset=\{\flat_1,\ldots,\flat_k\}$ denote an optimal collection of $q$-flats, that is, for any $p\in P$, the closest flat in $\flatset$
has distance at most $\f_k^q(P, \rho)$. Let $P_i\subseteq P$ be the set of points closest to $\flat_i$, for $i=1,\ldots,k$. 
Note that $\flat_i$ is the optimal $q$-flat for $P_i$, in other words, it realizes $\f_1^q(P_i,\rho)$.%
\footnote{For $\rho=\infty$, this is not necessarily true for any optimal solution, but we can replace every $q$-flat with the optimal one wlog}
Using Theorem~\ref{thm:meta_theorem} on the subsets $P_i$, we get for $\rho<\infty$ that
\begin{eqnarray*}
\f_k^q(\proj(P), \rho)\leq \sum_{i=0}^{k} \f_1^q(\proj(P_i), \rho)
\leq \sum_{i=0}^{k}(1+\eps) \f_1^q(P_i, \rho)= (1+\eps)\f_k^q(P, \rho),
\end{eqnarray*}
proving the second inequality. The first part follows the same way considering an optimal $\flatset$ for $\proj(P)$.
The case $\rho=\infty$ is analogous, replacing all sums by max.
\end{proof}

\section{Coresets for Projective Clustering}
\label{sec:coresets}

Recall that $C_\rho(q,\eps)$ is defined as the coreset size for approximating the $L_\rho$-optimal $q$-flat,
in the sense that there exists a subset of $C_\rho(q,\eps)$ input points whose span contains
an $\eps$-approximate optimal $q$-flat. 

\subparagraph{The case of $0$-flats}
For a point set $P\subset\reals^d$, we consider the point that minimizes,
for a fixed $\rho\in[1,\infty]$, 
$$\delta(q):=\left(\sum_{p\in P} d(p,q)^\rho\right)^{1/\rho}$$
over all $q\in\reals^d$. 
We call the minimizer $o$ in $\reals^d$ the \emph{optimal center}
and note that $\delta(o)=f_1^0(P,\rho)$.
We call $o'$ an \emph{$\eps$-approximate center},
if $\delta(o')\leq (1+\eps)\delta(o)$.
Since Thm~\ref{thm:meta_theorem} only requires a bound on the coreset
and no method to compute it, we can free ourselves
from algorithmic considerations and concentrate on existential results.

A lower bound of $\Omega(1/\eps)$ can be derived easily by considering the standard simplex.
This has been done by B\u{a}doiu and Clarkson~\cite{bc-optimal} for the case $\rho=\infty$.

\begin{theorem}
\label{thm:coreset_lower_bound}
There exists a point set such that no subset of less than $1/(2\eps)$ points contains
an $\eps$-approximate center, i.e., $C_\rho(0,\eps)=\Omega(1/\eps)$.
\end{theorem}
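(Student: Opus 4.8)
The plan is to exhibit an explicit point configuration — the vertices of a regular simplex — for which no small subset can host an $\eps$-approximate center, and the argument should work uniformly for every $\rho\in[1,\infty]$. Concretely, I would take $P=\{e_1,\dots,e_N\}\subset\reals^N$ to be the standard basis vectors (the vertices of the standard simplex), for $N$ to be fixed below. By symmetry the optimal center for any $\rho$ is the barycenter $o=\tfrac1N\mathbf{1}$, at distance $\sqrt{1-1/N}$ from every vertex, so $\delta(o)=N^{1/\rho}\sqrt{1-1/N}$ for finite $\rho$ (and $\sqrt{1-1/N}$ for $\rho=\infty$). Now consider any subset $E\subseteq P$ with $|E|=t$. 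Any point $q$ in $\span{E}$ is a linear combination of the $t$ chosen basis vectors, hence has zero coordinates on the other $N-t$ indices. For each such "missing" vertex $e_i$ with $i\notin E$, we get $\|q-e_i\|^2 = \|q\|^2 + 1 \geq 1$, and in fact by convexity/Cauchy–Schwarz the best one can do on the missing coordinates forces $\|q-e_i\|$ to be bounded below by something strictly exceeding $\sqrt{1-1/N}$ once $t$ is small relative to $N$.

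The key computation I would carry out is the following. Fix $q\in\span{E}$; write $q=\sum_{i\in E}x_i e_i$. For $i\in E$, $\|q-e_i\|^2 = (1-x_i)^2 + \sum_{j\in E,\,j\ne i}x_j^2$; for $i\notin E$, $\|q-e_i\|^2 = 1+\sum_{j\in E}x_j^2$. To minimize the $\rho=\infty$ objective one wants to equalize, but the missing vertices pin the max at $\ge\sqrt{1+\|x\|_2^2}\ge 1$; since $\delta(o)=\sqrt{1-1/N}$, the approximation ratio is at least $1/\sqrt{1-1/N}=\sqrt{N/(N-1)}\ge 1+\tfrac{1}{2(N-1)}$, so with $t<N$ no $\eps$-approximate center exists when $\eps < \tfrac{1}{2(N-1)}$, i.e. setting $N=\lceil 1/(2\eps)\rceil+1$ we conclude $C_\infty(0,\eps)\ge N-1 = \lceil 1/(2\eps)\rceil \ge 1/(2\eps)$. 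For general $\rho$ the same idea applies: the contribution of the $N-t$ missing vertices alone is at least $(N-t)^{1/\rho}$ (each such term is $\ge 1$), while $\delta(o)=N^{1/\rho}\sqrt{1-1/N}\le N^{1/\rho}$, so the ratio is at least $((N-t)/N)^{1/\rho}$ — but this goes the wrong way (it's $<1$). So for finite $\rho$ I instead need to keep the contribution of the vertices \emph{in} $E$ as well: a short convexity argument (the function $q\mapsto\sum_p d(p,q)^\rho$ is minimized over the whole space at $o$, and restricting to the proper subspace $\span{E}\not\ni o$ strictly increases it) combined with the quantitative gap from the missing coordinates gives the bound. The cleanest route may be to treat $\rho=\infty$ in full detail (as Bădoiu–Clarkson do) and remark that for finite $\rho$ the strict suboptimality of any proper coordinate subspace, made quantitative via the $\ge(N-t)^{1/\rho}$ lower bound on the missing-vertex terms against an explicit upper bound on $\delta(o)$, yields the same $\Omega(1/\eps)$ threshold.

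The main obstacle I anticipate is getting a \emph{clean uniform} statement across all $\rho\in[1,\infty]$ with the precise constant $1/(2\eps)$: the $\rho=\infty$ case is transparent because the max is dominated by the missing vertices, but for small finite $\rho$ (especially $\rho=1$) one must argue that pulling $q$ toward the $t$ chosen vertices cannot compensate for the $N-t$ vertices left at distance $>\delta(o)$ — this requires either the global-minimizer convexity argument or an explicit Lagrange/Cauchy–Schwarz estimate on $\sum_{i\in E}\|q-e_i\|^\rho + (N-t)(1+\|x\|_2^2)^{\rho/2}$, optimized over $x$. I would first nail down $\rho=\infty$ to pin the constant, then discharge finite $\rho$ by the monotonicity/convexity remark, accepting a possibly slightly worse but still $\Omega(1/\eps)$ constant if a unified clean constant proves fiddly.
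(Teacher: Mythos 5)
Your construction is the same as the paper's (vertices of the standard simplex, barycenter as optimal center), and your $\rho=\infty$ argument is essentially the B\u{a}doiu--Clarkson argument the paper cites. But for finite $\rho$ --- which is the substance of the theorem, since it claims $C_\rho(0,\eps)=\Omega(1/\eps)$ for every $\rho$ --- there is a genuine gap. The two ingredients you offer do not combine into a quantitative bound: the missing-vertex estimate (the $N-t$ absent vertices contribute at least $(N-t)^{1/\rho}$, versus $\delta(o)\le N^{1/\rho}$) gives a ratio below $1$, as you yourself note, and the convexity remark that restricting to a proper subspace \emph{strictly} increases the objective gives only strictness, never a $(1+\eps)$ gap, hence no lower bound on coreset size at all. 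What closes the case (and what the paper does) is to pin down the actual minimizer of $\delta$ over the hull of the $c$ chosen vertices --- in the paper's affine-hull setting this is the sub-barycenter $o'$, with distance $\sqrt{(c-1)/c}$ to chosen and $\sqrt{(c+1)/c}$ to missing vertices --- and then to exploit the regime $n\gg c$: dividing $\delta(o')^\rho\le(1+\eps)^\rho\delta(o)^\rho$ by $n$ and letting $n\to\infty$ kills the contribution of the $c$ chosen vertices and leaves $\sqrt{(c+1)/c}\le 1+\eps$, i.e.\ $c\ge 1/(2\eps+\eps^2)$, uniformly in $\rho$ including $\rho=\infty$. Your sketch never produces this ``let the simplex dimension grow relative to the subset size'' step, and without it the finite-$\rho$ case does not follow.

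Two further points. First, you work with the \emph{linear} span, which creates a complication you do not address: the origin lies in the span of every subset (even the empty one), and for finite $\rho$ it has $\delta(0)=N^{1/\rho}$ versus $\delta(o)=N^{1/\rho}\sqrt{1-1/N}$, so it is already an $\eps$-approximate center once $N\gtrsim 1/(2\eps)$; conversely, for $N$ near $1/(2\eps)$ a short expansion shows that subsets of size close to $N$ \emph{do} contain $\eps$-approximate centers for finite $\rho$ (shifting a small amount of mass toward the chosen vertices). So in the linear-span formulation the dimension $N$ must be tuned against the subset size $t$ (one can still extract $\Omega(1/\eps)$, e.g.\ with $N\approx 1/(4\eps)$, but only after an explicit optimization and with a worse constant); the paper avoids this entirely by using the affine hull, consistent with its coreset definition. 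Second, a slip in your $\rho=\infty$ computation: $\sqrt{N/(N-1)}\le 1+\tfrac{1}{2(N-1)}$, not $\ge$, so with $N=\lceil 1/(2\eps)\rceil+1$ your condition $\eps<\sqrt{N/(N-1)}-1$ can fail; the correct route is to require $(1+\eps)^2<N/(N-1)$, i.e.\ $N-1<1/(2\eps+\eps^2)$, which still yields $\Omega(1/\eps)$ and in fact matches the paper's own final constant.
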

\begin{proof}
Consider the standard $(n-1)$-simplex in $\reals^{n}$ spanned by the points $e_1,\ldots,e_n$.
The optimal center for any $\rho$ is the barycenter $o$ given by $(1/n,\ldots,1/n)$, and we have that
\[\delta(o)=n^{1/\rho}\sqrt{\frac{n-1}{n}}.\]
Choose a subset of size $c$, w.l.o.g. $e_1,\ldots,e_c$ and let $F$ be the affine subspace spanned by these points.
Let $o'$ denote the barycenter of the $(c-1)$-simplex $(e_1,\ldots,e_c)$. Now $o'$ is the point that
minimizes $\delta(\cdot)$ over $F$, since $o'$ is the orthogonal projection on $F$ of any $e_i$ with $i>c$. 
So, assuming that $F$ contains an $\eps$-approximate center, $o'$ must be an approximate center. On the other hand, $\delta(o')$
is easily computed by noting that $d(o',e_i)=\sqrt{(n-1)/n}$ for any $i\leq c$ and $d(o',e_i)=\sqrt{(n+1)/n}$ for any $i>c$. This yields
\begin{eqnarray*} 
\left( c \left(\sqrt{\frac{c-1}{c}}\right)^\rho + (n-c) \left(\sqrt{\frac{c+1}{c}}\right)^\rho\right)^{\frac{1}{\rho}} 
= \delta(o') \leq (1+\eps) \delta(o)=(1+\eps) n^{1/\rho}\sqrt{\frac{n-1}{n}}.
\end{eqnarray*}

Raising to the $\rho$-th power and dividing by $n$ yields that
\[\frac{c}{n} \left(\sqrt{\frac{c-1}{c}}\right)^\rho + \frac{n-c}{n} \left(\sqrt{\frac{c+1}{c}}\right)^\rho \leq(1+\eps)^\rho \left(\sqrt{\frac{n-1}{n}}\right)^{\rho}. \]
Because this bounds has to hold for every $n$, it must also hold in the limit for $n\rightarrow\infty$. That results in
\[\left(\sqrt{\frac{c+1}{c}}\right)^\rho \leq(1+\eps)^\rho, \]
and by solving for $c$ yields that $c\geq 1/(2\eps+\eps^2)\geq 1/(3\eps)$ for $\eps\leq 1$.
\end{proof}

The following result gives an almost tight upper bound for arbitrary $\rho$.
It follows directly from the techniques introduced
by Shyamalkumar and Varadarajan~\cite{sv-efficient} for the case of lines
through the origin. We give the proof for completeness.

\begin{theorem}
\label{thm:coreset_general_upper_bound}
For any (finite) point set $P\subset\reals^d$, there is a set $S\subset P$
of $O(1/\eps\log(1/\eps))$ points such that the subspace spanned by $S$ contains
an $\eps$-approximate center. In other words, $C_\rho(0,\eps)=O(1/\eps\log (1/\eps))$.
\end{theorem}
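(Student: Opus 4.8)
The plan is to transfer the iterative coreset construction of Shyamalkumar and Varadarajan~\cite{sv-efficient} from their ``best-fit line through the origin'' setting to the ``best-fit center'' setting, i.e.\ the case $q=0$ of the present statement.

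\emph{Setup and normalization.} First, the optimal center $o$ may be assumed to lie in $\mathrm{conv}(P)$: nearest-point projection onto a convex set does not increase the distance to any point already in the set, so projecting $o$ onto $\mathrm{conv}(P)\supseteq P$ does not increase any $d(p,o)$, hence does not increase $\delta$. Thus $o=\sum_j\lambda_j p_j$ is a convex combination of input points --- which is what makes it conceivable to trap a near-optimal center in the span of few input points. Second, a single well-chosen input point is already a $2$-approximate center (a short averaging/triangle-inequality estimate: a uniformly random $p\in P$ satisfies $\mathbb{E}\,\delta(p)^\rho\le 2^\rho\,\delta(o)^\rho$ since $\mathbb{E}\,d(p,o)^\rho=\delta(o)^\rho/n$), so we may start the construction from a singleton $S_0=\{p_0\}$ with $\delta(p_0)\le 2\,\delta(o)$, and no $\log n$ enters through the starting ratio.

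\emph{Iteration and the key progress lemma.} We grow $S$ one point at a time. In step $i$, let $o_i$ minimize $\delta$ over $\span{S_i}$; if $\delta(o_i)\le(1+\eps)\,\delta(o)$ we stop, otherwise we add to $S_i$ one input point $p$ chosen from among those contributing the most to $\sum_{r\in P}d(r,o_i)^\rho$, i.e.\ a point ``responsible'' for the current error. The heart of the proof is to show that, for a suitable step length $\lambda$, the candidate center $(1-\lambda)o_i+\lambda p\in\span{S_i\cup\{p\}}$ reduces the \emph{excess} $\delta(\cdot)-\delta(o)$ to at most a $(1-\Omega(\eps))$ fraction of $\delta(o_i)-\delta(o)$. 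Convexity of $\delta$ alone does not suffice here: writing $o$ as a convex combination and ``moving $o_i$ toward $o$'' only bounds a \emph{weighted average} of candidate costs (as $\delta$ is convex, not concave), and so does not by itself single out a good input point; the substitute, following SV, is to charge the residual error to the heavy points through a triangle-inequality estimate relating $d(\cdot,o_i)$, $d(\cdot,o)$, and $d(\cdot,p)$. The non-smooth endpoint norms $\rho=1$ and $\rho=\infty$ are absorbed by the same estimate (for $\rho=\infty$ this is a B\u{a}doiu--Clarkson-style step).

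\emph{Accounting and the obstacle.} Iterating, the excess starts below $\delta(o)$ and is multiplied by $(1-\Omega(\eps))$ at each step, so after $O(\tfrac1\eps\log\tfrac1\eps)$ steps it falls below $\eps\,\delta(o)$; equivalently, one runs $O(\log\tfrac1\eps)$ phases, each halving the relative error using $O(1/\eps)$ further points. The resulting $S$ has $O(\tfrac1\eps\log\tfrac1\eps)$ points and the point $o_i\in\span{S}$ is an $\eps$-approximate center, giving $C_\rho(0,\eps)=O(\tfrac1\eps\log\tfrac1\eps)$ --- within a logarithmic factor of the lower bound in Theorem~\ref{thm:coreset_lower_bound}. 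The one genuinely delicate step is the progress lemma: securing a per-step improvement that is uniform over all $\rho\in[1,\infty]$ and does not decay as $o_i$ approaches optimality. (For $\rho=2$ there is a shortcut --- a random sample of $O(1/\eps)$ points has a $(1+\eps)$-approximate centroid in its affine span --- but no comparably clean argument seems available for general $\rho$, and the extra logarithmic factor appears to be the price of SV's more robust charging.)
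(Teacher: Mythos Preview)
Your plan and the paper's proof both follow the Shyamalkumar--Varadarajan iterative scheme, but you have chosen a harder potential function, and that choice creates the very obstacle you flag as ``genuinely delicate.''

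The paper does not track the cost excess $\delta(o_i)-\delta(o)$. It tracks the Euclidean distance $d(c_i,o)$ from the current iterate to the optimal center $o$. The initialization is $c_0:=$ the input point \emph{closest to~$o$} (the argument is purely existential, so knowing $o$ is fine); this gives the pointwise inequality $d(c_0,o)\le d(o,p)$ for every $p\in P$. In each step one picks $s\in P$ maximizing the ratio $d(c_i,p)/d(o,p)$ --- not the heaviest contributor to $\delta(c_i)^\rho$. Since $c_i$ is not yet an $\eps$-approximate center, some $p$ must satisfy $d(c_i,p)>(1+\eps)d(o,p)$, so this maximum ratio exceeds $1+\eps$ at $s$. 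Taking $c_{i+1}$ as the point on the segment $c_is$ closest to $o$, the elementary geometric Lemma~2.1 of~\cite{sv-efficient} gives $d(c_{i+1},o)\le(1-\eps/2)\,d(c_i,o)$. After $k$ steps, for every $p\in P$,
\[
d(c_k,p)\;\le\; d(c_k,o)+d(o,p)\;\le\;(1-\eps/2)^k d(c_0,o)+d(o,p)\;\le\;\bigl(1+(1-\eps/2)^k\bigr)\,d(o,p),
\]
using $d(c_0,o)\le d(o,p)$ from the initialization. For $k=O(\tfrac1\eps\log\tfrac1\eps)$ this is a \emph{pointwise} bound $d(c_k,p)\le(1+\eps)d(o,p)$, which immediately yields $\delta(c_k)\le(1+\eps)\delta(o)$ for every $\rho$ at once.

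So the $\rho$-uniformity you worry about is obtained for free at the very end, not engineered into each step. Your initialization (a point with $\delta(p_0)\le 2\delta(o)$) does not provide the pointwise control $d(c_0,o)\le d(o,p)$ needed for the last inequality above, and your per-step progress lemma on the $\delta$-excess --- which you correctly identify as the crux and leave unproved --- is simply bypassed in the paper. The idea you are missing is to switch the potential from $\delta$-excess to distance-to-$o$; once you do that, the ``delicate'' step becomes a two-line Pythagorean computation independent of $\rho$.
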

\begin{proof}
We define an iterative procedure which creates points $c_0,c_1,\ldots$
such that $c_i$ is in the subspace spanned by $i$ input points.
The initial point $c_0$ is chosen to be point closest to the optimal center.%
\footnote{Again, since we only care about existence, we can conveniently
assume that the center is known to us.}
If some $c_i$ is an $\eps$-approximate center, we are done. 
Otherwise, we chose a point $c_{i+1}$ that is significantly closer
to $o$. For that, let $s$ be the point
the maximizes
$$\frac{d(c_i,p)}{d(o,p)}$$
over all $p\in P$. By construction, $d(c_i,s)\geq (1+\eps)d(o,s)$
We choose $c_{i+1}$ as the point on the line segment $c_is$
that is closest to $o$.
It follows easily~\cite[Lemma 2.1]{sv-efficient}
that $d(c_{i+1},o)\leq (1-\eps/2)d(c_i,o)$. Combined with the triangle
inequality and the fact that $c_0$ is the closest point to $o$ in $P$,
this implies that for any $p\in p$:
\begin{eqnarray*}
d(c_k,p)\leq d(c_k,o)+d(o,p)
\leq (1-\eps/2)^k d(c_0,o)+d(o,p)
\leq (1+(1-\eps/2)^k) d(o,p).
\end{eqnarray*}
For $k=O(1/\eps\log(1/\eps))$, this means that $d(c_k,p)\leq (1+\eps) d(o,p)$ for all $p\in P$, 
which directly implies that $\delta(c_k)\leq (1+\eps)\delta(o)$.
\end{proof}

Smaller coresets exist for special cases: for $\rho=\infty$, a coreset of size $O(1/\eps)$
(in fact, of size $\lceil 1/\eps\rceil$) exists~\cite{bc-optimal}. 
We prove the same bound for the case $\rho=2$:

\begin{theorem}
\label{thm:k_means}
For $\rho=2$, there is a set of $O(1/\eps)$ points such that their subspace
contains an $\eps$-approximate center, i.e., $C_2(0,\eps)=O(1/\eps)$.
\end{theorem}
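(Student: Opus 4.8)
The plan is to exploit the well-known fact that for the $\rho=2$ case (the $1$-mean problem), the optimal center $o$ is simply the centroid $\bar P = \frac1n\sum_{p\in P}p$, and that the objective decomposes via the parallel-axis identity: for any point $x$, $\sum_{p\in P}\|p-x\|^2 = \sum_{p\in P}\|p-\bar P\|^2 + n\|x-\bar P\|^2 = \delta(o)^2 + n\|x-\bar P\|^2$. Consequently $x$ is an $\eps$-approximate center if and only if $n\|x-\bar P\|^2 \le ((1+\eps)^2-1)\delta(o)^2 \le 3\eps\,\delta(o)^2$, i.e. it suffices to find a point $x$ in the span of $O(1/\eps)$ input points with $\|x-\bar P\|^2 \le \frac{3\eps}{n}\delta(o)^2 = 3\eps\cdot\frac1n\sum_p\|p-\bar P\|^2$.

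So the problem reduces to approximating the centroid $\bar P$ by a convex (or affine) combination of few points whose squared error is at most $3\eps$ times the average squared distance of the points to $\bar P$. This is exactly a mean-approximation statement amenable to a Maurey-type / Frank–Wolfe sampling argument: writing $\bar P$ as the average of the $n$ points, a uniformly random sample of $t$ points $p_{i_1},\dots,p_{i_t}$ has $\mathbb E\big\|\frac1t\sum_j p_{i_j} - \bar P\big\|^2 = \frac1t\cdot\frac1n\sum_p\|p-\bar P\|^2$. Taking $t = \lceil 3/\eps\rceil$ makes this expectation at most $\frac{\eps}{3}\cdot\frac1n\sum_p\|p-\bar P\|^2 \le \frac{3\eps}{n}\delta(o)^2$ (with slack), so some sample of size $t$ achieves it. The average of those $t$ points lies in their span, giving a set $S\subset P$ of $O(1/\eps)$ points whose span contains an $\eps$-approximate center.

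The steps in order: (i) record the parallel-axis identity and deduce that $\eps$-approximation is equivalent to $\|x-\bar P\|^2 \le 3\eps\cdot\frac1n\sum_p\|p-\bar P\|^2$; (ii) set up the uniform-sampling estimator for $\bar P$ and compute its expected squared error, using independence to kill cross terms; (iii) pick $t=\Theta(1/\eps)$ and invoke the probabilistic method to extract a good sample; (iv) conclude that the affine span of these $t$ points contains the sample average, which is an $\eps$-approximate center.

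I expect the only real subtlety to be bookkeeping the constants in the translation between the multiplicative factor $(1+\eps)$ on $\delta$ and the additive slack $n\|x-\bar P\|^2$ on $\delta^2$ — i.e. making sure $(1+\eps)^2-1 \ge 3\eps$ is used in the correct direction and that the sampling bound is comfortably below the threshold; these are routine. A secondary point worth a sentence is that we genuinely only need \emph{existence} (Theorem~\ref{thm:meta_theorem} does not require computing the coreset), so the probabilistic argument suffices and no derandomization is needed; moreover the $\rho=2$ analysis is special precisely because the objective is a sum of squared Euclidean norms and hence enjoys the Pythagorean decomposition that fails for general $\rho$.
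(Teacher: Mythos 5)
Your proof is correct, but it takes a genuinely different route from the paper. The paper's proof treats $\delta(\cdot)^2$ as a quadratic function $g$ over the standard simplex and runs Frank--Wolfe on it, invoking Clarkson's convergence theorem together with a bound on the curvature constant ($C_g\leq\diam(P)^2\leq 4r^2$) and the lower bound $g(o)\geq r^2$; after $O(1/\eps)$ greedy steps the iterate lies in the span of $O(1/\eps)$ input points and is $(1+O(\eps))$-optimal. You instead exploit the special structure of $\rho=2$ directly: the optimum is the centroid, the parallel-axis identity turns multiplicative approximation of $\delta$ into the condition $n\|x-\bar P\|^2\leq\bigl((1+\eps)^2-1\bigr)\delta(o)^2$, and a Maurey-type sampling argument (the empirical mean of $t$ i.i.d.\ uniform samples has expected squared error $\frac1t\cdot\frac1n\sum_{p}\|p-\bar P\|^2$) combined with the probabilistic method yields $t=O(1/\eps)$ points whose average meets that condition and lies in their span. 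Your route is more elementary and self-contained (no curvature bound, no appeal to Frank--Wolfe convergence), gives clean constants (a sample of size $\lceil 2/\eps\rceil$ already suffices in expectation), and is fully adequate here because, as you note, only existence of the coreset is needed; the paper's route is deterministic and constructive and matches the iterative-improvement style it uses for the other coreset bounds (Theorem~\ref{thm:coreset_general_upper_bound}). One small slip to fix: the available slack is $(1+\eps)^2-1=2\eps+\eps^2\geq 2\eps$, so the sufficient condition is $n\|x-\bar P\|^2\leq 2\eps\,\delta(o)^2$, not $3\eps\,\delta(o)^2$ (and the inequality $(1+\eps)^2-1\geq 3\eps$ you allude to fails for $\eps<1$); since your sampling bound delivers error at most $\tfrac{\eps}{3}\delta(o)^2$, well below the true threshold, the argument goes through unchanged once the inequality is stated in the right direction.
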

\begin{proof}
Writing $A$ for the matrix whose columns are the points in $P$
and $\Delta$ for the standard simplex with points $e_1,\ldots,e_n$, we can consider the function $g:\Delta\rightarrow\reals$ defined by
\begin{eqnarray*}
g(x)= \sum_{i=1}^n \| Ax - Ae_i\|^2
 = \sum_{i=1}^n (x-e_i)^T A^T A (x-e_i)
=x^T M x + x^T b + a
\end{eqnarray*}
with $M=nA^TA$, $b=-2\sum A^TAe_i$, and $a=\sum e_i^TA^TAe_i$. 
Therefore, $g$ is a quadratic function. 
We apply the Frank-Wolfe optimization~\cite{clarkson-coresets,jaggi-revisiting}
on the (convex) function $g$: this method starts in an arbitrary point $x_0$ in $P$
and improves the approximation quality in every step by moving towards the point in $P$ which the steepest descent.
The obtained sequence of iterates $x_0,x_1,\ldots$ 
converges to the (unique) minimum of $g$, and by construction, the iterate $x_i$ lies in the span of $i$ points of~$P$.

A crucial quantity in the convergence behavior of Frank-Wolfe is the quantity $C_g$: this is a scaled form
of the Bregman divergence of the function $g$, measuring the difference between $g(y)$ and the 
value at $y$ of the tangent plane of $g$ at $x$, for all pairs $x$ and $y$. Since $g$ is a quadratic function,
\cite[Sec. 4.3]{clarkson-coresets} asserts that $C_g\leq \diam(P)^2$. Writing $r$ for the radius of the MEB of $P$,
this implies $C_g\leq 4r^2$.

Slightly abusing notation, we let $o\in\Delta$ denote the point that minimizes $g$.
Using Theorem 2.3 from~\cite{clarkson-coresets}, after running the Franke-Wolfe optimization
for $k:=2\lceil1/\eps\rceil$ steps, we find an iterate $x_k$ on a $k$-simplex which satisfies
\[g(x_k)-g(o)\leq 4\eps C_f\leq 16\eps r^2\leq 16\eps g(o),\]
where the last inequality comes from the fact that with $p$ being the furthest point from $o$, it holds that
$g(o)\geq \|o-p\|^2\geq r^2$. Therefore, we have that $g(x_k)\leq (1+16\eps)g(o)$.
\end{proof}

\subparagraph{The case of general $q$}
The best known bounds for $C_\rho(q,\eps)$ with $q\geq 0$, are again due to Shyamalkumar and Varadarajan.
The aforementioned result for lines yields that $C_\rho(1,\eps)=O(1/\eps \log(1/\eps))$, the same bound as for $q=0$~\cite[Lemma 3.2]{sv-efficient}.
They use the line case in an inductive argument to show~\cite[Lemma 3.3]{sv-efficient}:

\begin{theorem}
\label{thm:sv-general-q-bound}
For $q\geq 1$, $C_\rho(q,\eps)=O(q^2/\eps \log(q/\eps))$.
\end{theorem}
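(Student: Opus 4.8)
The plan is to prove the bound $C_\rho(q,\eps)=O(q^2/\eps\log(q/\eps))$ by induction on $q$, bootstrapping from the line case $C_\rho(1,\eps)=O(1/\eps\log(1/\eps))$ (Theorem~\ref{thm:coreset_general_upper_bound} together with \cite[Lemma 3.2]{sv-efficient}). The base case $q=1$ is already established; the inductive step reduces a best-fit $q$-flat problem to a collection of best-fit line problems plus a best-fit $(q-1)$-flat problem. First I would set up the inductive hypothesis: every finite point set in $\reals^d$ admits a subset of $O((q-1)^2/\eps'\log((q-1)/\eps'))$ points whose span contains an $\eps'$-approximate best-fit $(q-1)$-flat, for a suitable $\eps'$ to be chosen as a constant fraction of $\eps$.

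Next I would carry out the geometric reduction that mirrors \cite[Lemma 3.3]{sv-efficient}. Fix an optimal $q$-flat $F_{\mathrm{opt}}$ realizing $f_1^q(P,\rho)$. The key idea is to pick a point $o$ of $P$ (or more precisely, to run a Frank-Wolfe/Shyamalkumar-Varadarajan-type iteration toward the optimal flat) so that after $O(1/\eps\log(1/\eps))$ steps one has a line $\ell$, spanned together with such iterates, that captures the ``worst'' direction within $F_{\mathrm{opt}}$ well enough that projecting $P$ onto $\ell^\perp$ changes the objective by at most a $(1+O(\eps))$ factor. Then one recurses on the projected point set, whose optimal fitting object is a $(q-1)$-flat; by the inductive hypothesis there is a small subset of the projected points whose span contains an $\eps'$-approximate $(q-1)$-flat. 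Lifting that subset back (the projection does not increase the cardinality), and adjoining it to the $O(1/\eps\log(1/\eps))$ points defining $\ell$, yields a subset of $P$ whose span contains an approximately optimal $q$-flat. Tracking the size: $C_\rho(q,\eps)\le C_\rho(q-1,\Theta(\eps)) + O(1/\eps\log(1/\eps))$, which unrolls to $q\cdot O(1/\eps\log(1/\eps)) = O(q^2/\eps\log(q/\eps))$ once one charges the loss in $\eps$ across the $q$ levels of recursion (each level loses a constant factor in $\eps$, so the effective $\eps$ at the bottom is $\Theta(\eps/q^{O(1)})$; this is precisely where the $\log(q/\eps)$ rather than $\log(1/\eps)$ enters, and where one factor of $q$ is absorbed).

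A delicate point I would be careful about is the accounting of the accumulated error. Naively running the recursion with a constant-factor shrink of $\eps$ at each of the $q$ levels would leave $\eps$ exponentially small in $q$; instead one must shrink $\eps$ only by a $(1-c/q)$-type factor per level so that the product over $q$ levels is still $\Theta(1)$, i.e. set $\eps' = \eps(1 - \Theta(1/q))$ or equivalently solve $(1+\eps_q)\cdots = 1+\eps$ with $\eps_i = \Theta(\eps/q)$ at each level. Then each line subproblem is solved to accuracy $\Theta(\eps/q)$, contributing $O((q/\eps)\log(q/\eps))$ points per level, over $q$ levels giving $O(q^2/\eps\log(q/\eps))$ in total. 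I would also need to verify that the coreset property is genuinely inherited under orthogonal projection onto $\ell^\perp$: a subset whose span contains an approximate $(q-1)$-flat in the projected space lifts to a subset of the same cardinality whose span (now containing $\ell$ as well) contains an approximate $q$-flat in $\reals^d$; this uses that the objective decomposes (by Pythagoras for $\rho=2$, and more carefully via the triangle-inequality/worst-ratio argument of \cite{sv-efficient} for general $\rho$) into the component along $\ell$ plus the component in $\ell^\perp$.

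The main obstacle will be the general-$\rho$ version of the decomposition step: unlike $\rho=2$, the $L_\rho$ objective does not split cleanly as a sum of squared components, so the reduction from the $q$-flat problem to a line problem plus a $(q-1)$-flat problem must be done through the Shyamalkumar-Varadarajan iterative argument (choosing at each step the point maximizing the ratio $d(c_i,p)/d(F_{\mathrm{opt}},p)$ and moving toward it), controlling how the per-point distances to the evolving flat relate to the optimal distances. Since the statement explicitly attributes the result to \cite[Lemma 3.3]{sv-efficient} and the excerpt's Theorem~\ref{thm:coreset_general_upper_bound} already reproduces the line case verbatim, I expect the intended proof here to simply cite \cite{sv-efficient} for the inductive step, perhaps with a sentence recalling the shape of the recursion and the error-accounting remark above; I would follow that route, giving the recurrence $C_\rho(q,\eps)\le C_\rho(q-1,\eps(1-\Theta(1/q))) + O((q/\eps)\log(q/\eps))$ and its solution, and referring the reader to \cite{sv-efficient} for the geometric details of each reduction step.
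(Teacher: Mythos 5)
Your proposal matches the paper's treatment: the paper gives no proof of this statement at all, simply attributing it to the inductive argument of Shyamalkumar and Varadarajan~\cite[Lemma 3.3]{sv-efficient}, which is exactly the route you end up taking. Your sketch of that induction (reduce the $q$-flat problem to a line problem plus a $(q-1)$-flat problem, solving each line subproblem to accuracy $\Theta(\eps/q)$ so that $q$ levels contribute $O((q/\eps)\log(q/\eps))$ points each) is a faithful account of the cited argument, including the error-accounting point that yields the $q^2$ and $\log(q/\eps)$ factors.
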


A natural question is to ask about the tightness of the coreset bounds: for the point case $q=0$, we conjecture that coresets
of size $O(1/\eps)$ exist for any norm $\rho$ (currently, this is only established for $\rho\in\{2,\infty\})$.
For general $q$, an improved upper bound of $O(q/\eps)$ would yield a target dimension linear in $q$
in our dimension reduction result.

\section{Applications}
\label{sec:applications}

\subparagraph{Streaming algorithms for projective clustering}
We consider the projective clustering problem in a streaming context. In this setup, we do not return the cluster centers (the $q$-flats)
but only an $\eps$-approximation of $\f_k^q(P,\rho)$.
We let $S(n,d,q,k,\eps,\rho)$ be the space complexity for this problem. We assume that $n$, the size of the stream, is known in advance.

Set $m:=O((q+1)^2/\eps^3 \log n\log((q+1)/\eps))$. 
In the simplest variant, our streaming initially chooses a $d\times m$ projection matrix uniformly at random,
projects every point from the stream to $\reals^m$, and stores all points in a set $P'$. The algorithm, then uses an (offline)-algorithm to approximate $\f_k^q(P',\rho)$.
The total work space of this algorithm is $O(dm+nm+S)$, where $dm$ is the space required to store
the projection matrix, $nm$ is the size of $P'$, and $S$ is the space required to find approximate clustering of $P'$.
Using any approximation algorithm that computes using linear space,
we obtain a streaming algorithm to approximate  $\f_k^q(P,\rho)$ with a space complexity of $O((q+1)^2(n+d)/\eps^3 \log n\log((q+1)/\eps))$.
This is much smaller than the input size of $O(dn)$ and, for small $q$, not too far from the lower bound of $\Omega(n)$~\cite{as_soda10}.

In a similar fashion, our results can be used to speed up other streaming approaches:
Again, we choose initially a $d\times m$ projection matrix uniformly at random which is stored throughout the algorithm.
Furthermore, we maintain the workspace of a streaming algorithm 
that computes an approximation of the considered projective clustering problem in $m$ dimensions. 
When a new point $p \in \reals^d$ arrives, we compute its projection $\proj(p) \in \reals^m$ 
and treat this as an input to the $m$-dimensional streaming algorithm. 
We return the output value of the $m$-dimensional streaming algorithm as our result.
The correctness of the approach (with high probability) follows from Corollary~\ref{cor:k-q-flat_preserved}. The space complexity is $O(dm+S)$,
with $S$ the space complexity of the $m$-dimensional streaming algorithm.

\subparagraph{Approximate Projective Clustering.}
Our technique is also useful for the computation of approximate cluster centers:
For a set $P$ of $n$ points in $\reals^d$,  let $T(n,d,q,k,\eps,\rho)$ denote the time complexity to compute 
$k$ $q$-flats $\flatset$ that $\eps$-approximate the optimal solution, 
that is, $(\sum_{p\in P} (\dist(p, \flatset))^\rho)^{1/\rho}\leq (1+\eps)\f_k^q(P,\rho)$.
We design a new algorithm as follows:
Set $\eps':=\eps/5$.
First, we randomly project the input point set from $d$ to $m:=O(C_\rho(q,\eps')\log n/\eps'^2)$ dimensions. Let $P'$ be this set of projected points.
Then, we ($\eps'$-approximately) solve the same problem for $P'$ in $m$ dimensions, using some algorithm for this problem as a black box.
The computed solution clusters $P'$ in $k$ subsets of points that are closest to a particular $q$-flat in the solution.
We let $P^1,\ldots P^k$ be the pre-image of these $k$ clusters and assume wlog that $P^i\cap P^j=\emptyset$.
For each $P^i$, we compute an $\eps'$-approximation of the best fitting $q$-flat.
We return the collection of these $k$ $q$-flats as solution.
Correctness of this approach follows from Theorem~\ref{thm:meta_theorem}
and Corollary~\ref{cor:k-q-flat_preserved}.
As an example, we get the $k$-center problem by setting $\rho=\infty$ and $q=0$. Using the bounds $C_\infty(0,\eps)=2/\eps$, 
$T(n,d,0,k,\eps,\infty)=O(nd2^{k\log k/\eps})$ and $T(n,d,0,1,\eps,\infty)=O(\frac{nd}{\eps^2}+\frac{1}{\eps^5})$ from \cite{bc-smaller}, 
we get a running time of
\[O(n\log n 2^{k\log k/\eps}+\frac{dn\log n}{\eps^3}).\]

\subparagraph{Approximating \v{C}ech complexes}
A standard tool in capturing topological properties of point cloud data
is the \emph{\Cech complex}%
\footnote{For brevity, we omit a thorough introduction of the topological concepts used in this paragraph. See~\cite{eh-computational} for more details}. 
It is usually defined to be the \emph{nerve} of balls of some fixed radius $\alpha$
centered at the points from the sample $P$, and denoted as $\cech_\alpha(P)$. An equivalent definition is that
a $k$-simplex $\{p_0,\ldots,p_k\}$ is in $\cech_\alpha(P)$ if and only if
the radius of $\meb(p_0,\ldots,p_k)$ is at most~$\alpha$.

The downside of \Cech complexes is the size: Their $d$-skeleton can consist of up to $O(n^{d+1})$ simplices. 
Recent work suggests to work instead with an approximation of the \Cech complex~\cite{ks-cech}
(or of the closely related Vietoris-Rips complex~\cite{sheehy-linear}\,\cite{dfw-graph}).
``Approximation'' in this context means that the persistence diagrams of the modules induced by the \Cech filtration and by the approximate filtration are close to each other.
Theorem~\ref{thm:meta_theorem} for $q=0$, $k=1$ and $\rho=\infty$ implies that the radius of MEBs is preserved for any subset.
That implies immediately that \Cech complexes can be approximated by \Cech complexes in lower dimensions.

\begin{proposition}
\label{dim-reduction-cech}
For $0<\eps\leq\frac{c-1}{c}<1$ with $c>1$ and arbitrary constant, 
a set $P\subset\reals^d$ of $n$ points, and $m=\Theta(\log(n)/\eps^3)$,
a random projection $\proj:\reals^d\rightarrow \reals^m$ satisfies with high probability that
\[\cech_{(1-c\eps)\alpha}(P)\subseteq\cech_\alpha(\proj(P))\subseteq\cech_{(1+c\eps)\alpha}(P).\]
\end{proposition}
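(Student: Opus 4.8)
The plan is to derive this directly from Theorem~\ref{thm:meta_theorem} applied to the MEB case ($q=0$, $k=1$, $\rho=\infty$), translating the multiplicative distortion of MEB radii into the containment of \Cech complexes. Recall that a simplex $\sigma=\{p_0,\ldots,p_k\}$ lies in $\cech_\alpha(P)$ iff the radius $r(\sigma)$ of $\meb(p_0,\ldots,p_k)$ satisfies $r(\sigma)\le\alpha$; and note that $r(\sigma)=\f_1^0(\{p_0,\ldots,p_k\},\infty)$, the $L_\infty$-distance of the point set to its best-fitting $0$-flat. So the combinatorial structure of $\cech_\alpha(\proj(P))$ is entirely governed by the radii $r(\proj(\sigma))$ over all subsets $\sigma\subseteq P$.

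First I would invoke Theorem~\ref{thm:meta_theorem} with $q=0$ and $\rho=\infty$. Using the coreset bound $C_\infty(0,\eps')=O(1/\eps')$ from \cite{bc-optimal} (in fact $\lceil 1/\eps'\rceil$), the required target dimension is $m\ge\lambda\cdot C_\infty(0,\eps'/2)\log(n)/\eps'^2=O(\log(n)/\eps'^3)$. With an appropriate reparametrization $\eps'\leftrightarrow\eps$ (absorbing constants), $m=\Theta(\log(n)/\eps^3)$ suffices, and the theorem gives, with high probability, simultaneously for every subset $S\subseteq P$,
\[(1-\eps)\,r(S)\le r(\proj(S))\le(1+\eps)\,r(S),\]
where I write $r(S):=\f_1^0(S,\infty)$ for the MEB radius of $S$ (and of $\proj(S)$ likewise). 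This is the one place the whole strength of Section~\ref{sec:JL-meta} is used; everything after is bookkeeping on the nerve condition.

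Next I would translate this into the two containments. For the left inclusion: if $\sigma\in\cech_{(1-c\eps)\alpha}(P)$, then $r(\sigma)\le(1-c\eps)\alpha$, hence $r(\proj(\sigma))\le(1+\eps)r(\sigma)\le(1+\eps)(1-c\eps)\alpha$. I need $(1+\eps)(1-c\eps)\le 1$, i.e. $1-c\eps+\eps-c\eps^2\le 1$, i.e. $\eps(1-c-c\eps)\le 0$, which holds exactly when $c\ge 1$ (so $1-c-c\eps<0$); thus $r(\proj(\sigma))\le\alpha$ and $\sigma\in\cech_\alpha(\proj(P))$. For the right inclusion: if $\sigma\in\cech_\alpha(\proj(P))$, then $r(\proj(\sigma))\le\alpha$, and from the lower bound $r(\sigma)\le r(\proj(\sigma))/(1-\eps)\le\alpha/(1-\eps)$. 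I then need $1/(1-\eps)\le 1+c\eps$, i.e. $1\le(1+c\eps)(1-\eps)=1+(c-1)\eps-c\eps^2$, i.e. $(c-1)-c\eps\ge 0$, i.e. $\eps\le(c-1)/c$ — precisely the hypothesis of the proposition. Hence $r(\sigma)\le(1+c\eps)\alpha$ and $\sigma\in\cech_{(1+c\eps)\alpha}(P)$. Both inclusions being over the same (high-probability) event from Theorem~\ref{thm:meta_theorem}, the proposition follows.

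The main obstacle — really the only subtle point — is getting the constant management right: verifying that a single draw of $\proj$ to dimension $\Theta(\log n/\eps^3)$ can be made to yield distortion $\eps$ (not some $c'\eps$) on all subset radii simultaneously, so that the clean inequalities $(1+\eps)(1-c\eps)\le 1$ and $1/(1-\eps)\le 1+c\eps$ close under the stated hypothesis $\eps\le(c-1)/c$. This amounts to choosing the hidden constant $\lambda$ in $m$ appropriately and noting that the simultaneous-over-all-subsets guarantee is exactly what Theorem~\ref{thm:meta_theorem} provides (its statement quantifies over all $S\subseteq P$ under one high-probability event). Everything else is the elementary algebra above. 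I would also remark that since all arguments only ever reference MEB radii of subsets, no facts about the nerve beyond the radius characterization of \Cech simplices are needed, and the persistence-closeness consequence then follows from standard interleaving arguments applied to the sandwich of filtrations.
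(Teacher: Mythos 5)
Your proposal is correct and follows exactly the route the paper intends: it invokes Theorem~\ref{thm:meta_theorem} with $q=0$, $\rho=\infty$ (using $C_\infty(0,\eps)=O(1/\eps)$ to get $m=\Theta(\log n/\eps^3)$) so that MEB radii of all subsets are preserved simultaneously, and then converts the $(1\pm\eps)$ distortion into the two containments via the algebra $(1+\eps)(1-c\eps)\le 1$ and $1/(1-\eps)\le 1+c\eps$ under $\eps\le(c-1)/c$. The paper itself only sketches this ("implies immediately"), and your write-up is a correct elaboration of that same argument.
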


An interesting consequence of this statement is that a \Cech complex cannot have
any significantly persistent features in dimensions higher than $m$. Independently from our work, Sheehy~\cite{sheehy-persistent} 
recently showed a slightly stronger result, projecting to $\Theta(\log(n)/\eps^2)$ dimensions.

\subparagraph{Acknowledgments.} The authors thank the anonymous referees of an earlier version of this paper
whose comments have led to significant simplifications and improvements of our results. The first author 
acknowledges support by the Max Planck Center for Visual Computing and Communication.
The second author acknowledges support by NSF through grant CCF-1464276.

{
\small

}


\begin{thebibliography}{10}

\bibitem{ahh_manifold}
P.~Agarwal, S.~Har-Peled, and H.~Yu.
\newblock Embeddings of surfaces, curves, and moving points in {E}uclidean
  space.
\newblock {\em SIAM Journal on Computing}, 42(2):442--458, 2013.

\bibitem{am-means}
P.~Agarwal and N.~Mustafa.
\newblock $k$-means projective clustering.
\newblock In {\em Proceedings of the Twenty-third {ACM} {SIGACT-SIGMOD-SIGART}
  Symposium on Principles of Database Systems}, pages 155--165, 2004.

\bibitem{as_soda10}
P.~Agarwal and R.~Sharathkumar.
\newblock Streaming algorithms for extent problems in high dimensions.
\newblock {\em Algorithmica}, pages 1--16, 2013.

\bibitem{am-means1}
P.~K. Agarwal, C.~M. Procopiuc, and K.~R. Varadarajan.
\newblock Approximation algorithms for a k-line center.
\newblock {\em Algorithmica}, 42(3-4):221--230, 2005.

\bibitem{hardness}
D.~Aloise, A.~Deshpande, P.~Hansen, and P.~Popat.
\newblock {NP}-hardness of euclidean sum-of-squares clustering.
\newblock {\em Machine Learning}, 75(2):245--248, 2009.

\bibitem{ahk-fast}
S.~Arora, E.~Hazan, and S.~Kale.
\newblock A fast random sampling algorithm for sparsifying matrices.
\newblock In {\em Approximation, Randomization, and Combinatorial Optimization.
  Algorithms and Techniques}, volume 4110 of {\em LNCS}, pages 272--279. 2006.

\bibitem{bc-smaller}
M.~B\^{a}doiu and K.~Clarkson.
\newblock Smaller core-sets for balls.
\newblock In {\em Proceedings of the 14th Annual ACM-SIAM Symposium on Discrete
  Algorithms}, pages 801--802, 2003.

\bibitem{bih-cluster}
M.~B{\=a}doiu, S.~Har-Peled, and P.~Indyk.
\newblock Approximate clustering via core-sets.
\newblock In {\em Proceedings of the 34th Annual ACM Symposium on Theory of
  Computing}, pages 250--257, 2002.

\bibitem{randomkmeans}
C.~Boutsidis, A.~Zouzias, and P.~Drineas.
\newblock Random projections for k-means clustering.
\newblock In J.~Lafferty, C.~Williams, J.~Shawe-Taylor, R.~Zemel, and
  A.~Culotta, editors, {\em Advances in Neural Information Processing Systems
  23}, pages 298--306. 2010.

\bibitem{bc-optimal}
M.~B\u{a}doiu and K.~Clarkson.
\newblock Optimal core-sets for balls.
\newblock {\em Computational Geometry: Theory and Applications}, 40:14--22,
  2008.

\bibitem{chen-coresets}
K.~Chen.
\newblock On coresets for k-median and k-means clustering in metric and
  {E}uclidean spaces and their applications.
\newblock {\em SIAM Journal of Computing}, 39(3):923--947, 2009.

\bibitem{c_manifold}
K.~Clarkson.
\newblock Tighter bounds for random projections of manifolds.
\newblock In {\em Proceedings of the 24th Symposium on Computational Geometry},
  pages 39--48, 2008.

\bibitem{clarkson-coresets}
K.~Clarkson.
\newblock Coresets, sparse greedy approximation, and the {F}rank-{W}olfe
  algorithm.
\newblock {\em ACM Transactions on Algorithms}, 6(4), 2010.

\bibitem{cw-low}
K.~Clarkson and D.~Woodruff.
\newblock Low rank approximation and regression in input sparsity time.
\newblock In {\em Proceedings of the 45th Annual ACM Symposium on Theory of
  Computing}, pages 81--90, 2013.

\bibitem{cemmp-dimensionality}
M.~B. Cohen, S.~Elder, C.~Musco, C.~Musco, and M.~Persu.
\newblock Dimensionality reduction for k-means clustering and low rank
  approximation.
\newblock {\em CoRR}, abs/1410.6801, 2014.

\bibitem{dg-elementary}
S.~Dasgupta and A.~Gupta.
\newblock An elementary proof of a theorem of {J}ohnson and {L}indenstrauss.
\newblock {\em Random Structures \& Algorithms}, 22:60--65, 2003.

\bibitem{dfw-graph}
T.~Dey, F.~Fan, and Y.~Wang.
\newblock Graph induced complex on point data.
\newblock In {\em Proceedings of the 29th ACM Symposium on Computational
  Geometry}, 2013.

\bibitem{eh-computational}
H.~Edelsbrunner and J.~Harer.
\newblock {\em Computational Topology, An Introduction}.
\newblock American Mathematical Society, 2010.

\bibitem{fo-spectral}
U.~Feige and E.~Ofek.
\newblock Spectral techniques applied to sparse random graphs.
\newblock {\em Random Structures \& Algorithms}, 27:251--275, 2005.

\bibitem{fmsw-coresets}
D.~Feldman, M.~Monemizadeh, C.~Sohler, and D.~Woodruff.
\newblock Coresets and sketches for high dimensional subspace approximation
  problems.
\newblock In {\em Proceedings of the 21st Annual ACM-SIAM Symposium on Discrete
  Algorithms}, pages 630--649, 2010.

\bibitem{fss-turning}
D.~Feldman, M.~Schmidt, and C.~Sohler.
\newblock Turning big data into tiny data: Constant-size coresets for
  \emph{k}-means, {PCA} and projective clustering.
\newblock In {\em Proceedings of the 24th Annual ACM-SIAM Symposium on Discrete
  Algorithms}, pages 1434--1453, 2013.

\bibitem{hm-coresets}
S.~Har-Peled and S.~Mazumdar.
\newblock On coresets for k-means and k-median clustering.
\newblock In {\em Proceedings of the 36th Annual ACM Symposium on Theory of
  Computing}, pages 291--300, 2004.

\bibitem{hv-02}
S.~Har-Peled and K.~Varadarajan.
\newblock Projective clustering in high dimensions using core-sets.
\newblock In {\em Proceedings of the 18th ACM Symposium on Computational
  Geometry}, pages 312--318, 2002.

\bibitem{in_bounded}
P.~Indyk and A.~Naor.
\newblock Nearest-neighbor-preserving embeddings.
\newblock {\em ACM Transactions on Algorithms}, 3(3), Aug. 2007.

\bibitem{jaggi-revisiting}
M.~Jaggi.
\newblock Revisiting {F}rank-{W}olfe: Projection-free sparse convex
  optimization.
\newblock In {\em Proceedings of the 30th International Conference on Machine
  Learning, {ICML} 2013, Atlanta, GA, USA, 16-21 June 2013}, pages 427--435,
  2013.

\bibitem{jl-extensions}
W.~Johnson and J.~Lindenstrauss.
\newblock Extensions of {L}ipschitz mappings into a {H}ilbert space.
\newblock {\em Contemporary Mathematics}, 26:189--206, 1982.

\bibitem{ks-cech}
M.~Kerber and R.~Sharathkumar.
\newblock Approximate \v{C}ech complex in low and high dimensions.
\newblock In {\em 24th International Symposium on Algorithms and Computation},
  LNCS 8283, pages 666--676, 2013.

\bibitem{magen}
A.~Magen.
\newblock Dimensionality reductions in $\ell_2$ that preserve volumes and
  distance to affine spaces.
\newblock {\em Discrete \& Computational Geometry}, 38(1):139--153, 2007.

\bibitem{nn-lower}
J.~Nelson and H.~Nguyen.
\newblock Lower bounds for oblivious subspace embeddings.
\newblock In {\em Automata, Languages, and Programming}, volume 8572 of {\em
  Lecture Notes in Computer Science}, pages 883--894. Springer Berlin
  Heidelberg, 2014.

\bibitem{or_clustering}
R.~Ostrovsky and Y.~Rabani.
\newblock Polynomial time approximation schemes for geometric k-clustering.
\newblock In {\em Proceedings of the 41st Annual Symposium on Foundations of
  Computer Science}, FOCS '00, pages 349--, 2000.

\bibitem{proj_survey}
C.~Procopiuc.
\newblock Projective clustering.
\newblock In C.~Sammut and G.~Webb, editors, {\em Encyclopedia of Machine
  Learning}, pages 806--811. Springer US, 2010.

\bibitem{sarlos}
T.~Sarlos.
\newblock Improved approximation algorithms for large matrices via random
  projections.
\newblock In {\em Foundations of Computer Science}, pages 143--152, 2006.

\bibitem{sheehy-linear}
D.~Sheehy.
\newblock Linear-size approximation to the {V}ietoris-{R}ips filtration.
\newblock In {\em Proceedings of the 28th ACM Symposium on Computational
  Geometry}, pages 239--248, 2012.

\bibitem{sheehy-persistent}
D.~Sheehy.
\newblock The persistent homology of distance functions under random
  projection.
\newblock In {\em Proceedings of the 30th ACM Symposium on Computational
  Geometry}, 2014.

\bibitem{sv-efficient}
N.~Shyamalkumar and K.~Varadarajan.
\newblock Efficient subspace approximation algorithms.
\newblock {\em Discrete {\&} Computational Geometry}, 47(1):44--63, 2012.

\bibitem{vempala-book}
S.~Vempala.
\newblock {\em The random projection method}, volume~65.
\newblock AMS Bookstore, 2004.

\end{thebibliography}
\end{document}